\newtheorem{theorem}{Theorem}
\newtheorem{lemma}{Lemma}
\newtheorem{remark}{Remark}
\newcommand{\rarr}{\rightarrow}
\begin{document}

\title{Information-Theoretical Approach to Integrated Pulse-Doppler Radar and Communication Systems
}

\author{\IEEEauthorblockN{ Geon~Choi} and
 \IEEEauthorblockN{Namyoon~Lee},
 {\it Senior Member,~IEEE}

 \thanks{ G. Choi is with the Department of Electrical Engineering, POSTECH, South Korea, Pohang, (e-mail: {\texttt{simon03062@postech.ac.kr}}). N. Lee is with the School of Electrical Engineering, Korea University, South Korea, Seoul, (e-mail: {\texttt{namyoon@korea.ac.kr}})}}
 



\maketitle

\begin{abstract}

Integrated sensing and communication improves the design of systems by combining sensing and communication functions for increased efficiency, accuracy, and cost savings. The optimal integration requires understanding the trade-off between sensing and communication, but this can be difficult due to the lack of unified performance metrics. In this paper, an information-theoretical approach is used to design the system with a unified metric. A sensing rate is introduced to measure the amount of information obtained by a pulse-Doppler radar system.  An approximation and lower bound of the sensing rate is obtained in closed forms. Using both the derived sensing information and communication rates, the optimal bandwidth allocation strategy is found for maximizing the weighted sum of the spectral efficiency for sensing and communication. The simulation results confirm the validity of the approximation and the effectiveness of the proposed bandwidth allocation.
\end{abstract}

\section{Introduction}\label{sec1}

The persistent trend of spectrum scarcity in wireless communications has prompted the development of Integrated Sensing and Communication (ISAC), a revolutionary approach to provide high-speed communication and precise sensing services within limited spectrum availability. Unlike traditional communication and sensing systems, which are independently designed with separate hardware platforms and orthogonal spectrums, ISAC optimizes the system by sharing bandwidth and hardware \cite{zhang2021overview, cui2021integrating, liu2020joint, mishra2019toward, zheng2019radar}. This joint design approach enhances spectral efficiency and sensing accuracy, while also reducing hardware costs. ISAC has gained popularity as a solution for emerging applications, including autonomous driving in vehicular networks, indoor positioning with WiFi signals, and joint radar imaging and communication systems \cite{ ma2020joint, kumari2017performance, wang2016we, colone2013wifi, sankar2022beamforming, liu2022transmit}.

The integration of sensing and communication systems presents a major challenge due to the lack of understanding of the trade-off between their performance when sharing the same radio spectrum. While the classical Shannon theory \cite{shannon1948mathematical} outlines the limit of communication systems, it remains unclear what the maximum amount of reliable information can be obtained from unknown objects  (e.g., distances, velocities, angle of arrivals, and reflected power levels) in a sensing system for a given time-frequency resource. Historically, radar systems have been designed to accurately estimate the features of unknown objects, with mean-squared error (MSE) \cite{richards2010principles} and Cramer-Rao lower bound (CRLB)  \cite{poor1994introduction} being popular performance metrics. However, the lack of a connection between Fisher information and mutual information in information theory impedes the design of integrated sensing and communication systems with a unified performance metric. Thus, a unified performance metric is crucial for optimal integration of these heterogeneous systems.

In this paper, we present a new metric, the \textit{sensing rate}, for pulse-Doppler radar systems. The rate measures the amount of information about the target's environment obtained from the radar's returned signals. By using this rate, we derive the optimal bandwidth allocation strategy for ISAC systems that maximizes the combined spectral efficiency of both sensing and communication systems.


\subsection{Related Work}
There have been extensive efforts to evaluate information rates of a sensing system employing an information-theoretic approach. The pioneering work on incorporating the information-theoretical metric for sensing systems was found in \cite{chiriyath2015inner, paul2015extending, chiriyath2017radar}. The idea is to connect CRLB of the time delay estimation of a target to the differential entropy in measuring the target's uncertainty. The estimation radar rate is defined as the amount of the change for differential entropies between a returned signal and the signal after applying radar post-processing for the delay estimation. Armed with this tool, a trade-off between sensing and communication rates was characterized in different resource-sharing scenarios. In analogy to \cite{he2018performance}, the radar rate was derived using the CRLB of a target location in multi-antenna sensing systems, and the trade-offs between radar and communication rates were derived in a multi-antenna ISAC setting. Very recently, using the fact that CRLB constitutes a lower bound for the MSE of unbiased estimators \cite{xiong2022flowing, liu2021cramer}, the CLRB-rate region was derived to characterize the trade-off of ISAC systems.

An approach to understanding the information-theoretic limitation for an ISAC problem was initially introduced in \cite{kobayashi2018joint}, in which a transmitter sends a message to a receiver via a state-dependent stationary memoryless channel, and it estimates the state of the channel under the minimum distortion using a strictly causal channel output feedback. The trade-off between communication rates and state estimation accuracy in their setup was characterized using a capacity-distortion function. Recently, this approach has been extended to multi-user communication scenarios, including multiple access and broadcast channels in their subsequent works \cite{kobayashi2019joint, ahmadipour2021joint}. In addition, the simultaneous binary state estimation and data communication problem was considered in \cite{joudeh2022joint} and characterized a trade-off in terms of communication rate against the error exponent in detecting the binary channel state.

Although these prior works have opened a new direction toward optimally integrating the sensing and communication systems from an information-theoretical viewpoint, these approaches still have limitations to understanding the sensing rate for widely-used pulse-Doppler radar systems (e.g., radar imaging applications). A pulse-Doppler radar is a sensing system that simultaneously determines a target's range, velocity, and reflected power by sending multiple pulses repeatedly during a coherent processing interval (CPI) \cite{richards2010principles, mishra2019sub, herman2009high, bajwa2011identification}. In a pulse-Doppler radar, the range and doppler resolutions are determined as a function of signal bandwidth and coherent processing interval (CPI), respectively. Instead of CRLB for the parameter estimation, these resolutions are more decisive because they limit the accuracy of the range and doppler estimation. Consequently, these parameters should be incorporated when defining the sensing rate. Unfortunately, all prior works in \cite{chiriyath2015inner, paul2015extending, chiriyath2017radar, he2018performance, xiong2022flowing, liu2021cramer, kobayashi2018joint, kobayashi2019joint, ahmadipour2021joint, joudeh2022joint} do not consider these resolution effects when defining the sensing rates.

In a sensing system, a reflected signal by a target, referred to as a target signal, can be interpreted as \textit{a transmitted signal by a target in environments}; thereby, it is essential to exploit the distribution of the target signal to measure the uncertainty of a target in a sensing system. The Swerling target models have been popularly used in radar systems, which combines a probability density function (PDF) and a decorrelation time for the target radar cross-section (RCS) \cite{swerling1960probability}. For instance, In Swerling I and Swerling II target models, the total RCS is assumed to be the sum of many independent small scatterers, each with approximately equal individual RCS power in analogy to modeling a rich-scattering channel in communications. Furthermore, in Swerling I, this total RCS  is assumed to be constant during CPI, and the amplitude is distributed as Rayleigh in analogy to a block fading model in communications. In Swerling II, however, the total RCS is assumed to change with every pulse per scan, i.e., a fast-fading channel model in communications. As a result, unlike the previously defined sensing rates in \cite{chiriyath2015inner, paul2015extending, chiriyath2017radar, he2018performance, xiong2022flowing, liu2021cramer}, the sensing rate also requires incorporating the statistical distributions of the target signals. 

\subsection{Contributions}

How much information rate is acquired by the pulse-Doppler radar system from targets in an environment? In this paper, we address this question by introducing a new concept - the sensing rate of a range-Doppler radar system. This rate is calculated as a combination of range, Doppler resolutions, and target's statistical distributions. Unlike previous works in \cite{chiriyath2015inner, paul2015extending, chiriyath2017radar, he2018performance, xiong2022flowing, liu2021cramer} that utilize CRLB, our approach is based on more practical models and assumptions of the pulse-Doppler radar system. The key contributions of this paper are outlined below:

\begin{itemize}

\item We first introduce the concept of sensing rate, which quantifies the amount of information obtained from a target in terms of bits per channel use. We then derive the MSE-optimal target detector under the Bernoulli-Gaussian target distribution, demonstrating that it generalizes the traditional Neyman-Pearson detector. By utilizing the I-MMSE relationship from \cite{guo2005mutual}, we derive an exact expression for the sensing rate in integral form. This expression reveals that the sensing rate is comprised of two components, related to the target's location and fluctuation information. By leveraging this insight, we establish a connection between information-theoretic measures and estimation-theoretic measures, such as the minimum target detection error probability and the minimum MSE of target RCS.


\item We also derive a tight approximation and a lower bound of the sensing rate in closed-form expressions that maintain full generality. These expressions are valuable in comprehending how the sensing rate varies based on system parameters. The derived lower bound of the sensing rate clearly indicates that the rate is directly proportional to the target uncertainty in the range-Doppler plane and the target signal-to-noise ratio (SNR). Our simulations validate the accuracy of the derived analytical expressions.

\item To shed light on the derived sensing rate, we present an optimal bandwidth allocation strategy for an ISAC system. To the best of our knowledge, no prior studies have explored the optimal bandwidth allocation problem for an ISAC system that maximizes the weighted sum spectral efficiency, due to the lack of a definition of sensing rates in terms of bits per channel use. With the derived information sensing rate and Shannon's rate, our optimal bandwidth allocation strategy seeks to maximize the weighted sum spectral efficiency of both the sensing and communication rates. By deriving the first-order optimality condition for the weighted sum spectral efficiency, our strategy suggests that the ISAC system should allocate more bandwidth to the sensing system as the target uncertainty and signal power increase. This confirms our intuition, and simulation results show that our derived bandwidth allocation solution outperforms traditional uniform bandwidth allocation strategies.

\end{itemize}

\section{System Model}
In this section, we present an ISAC system model. As illustrated in Fig.~\ref{fig:ISAC system model}, an ISAC transceiver sends a pulse train and receives echoes for sensing an unknown environment. Meanwhile, it also transmits a message to a receiver for communications. The total system bandwidth is assumed to be $B$, and the sensing and communication systems use a fraction of the total bandwidth, i.e., $B_{\rm  s}=\lambda B$  and  $B_{\rm c}=(1-\lambda) B$ for $\lambda \in [0,1]$. We explain the signal models for the sensing system, and then define the performance metric for the ISAC system.

\begin{figure}[t]
\centering
\includegraphics[width=0.8\columnwidth]{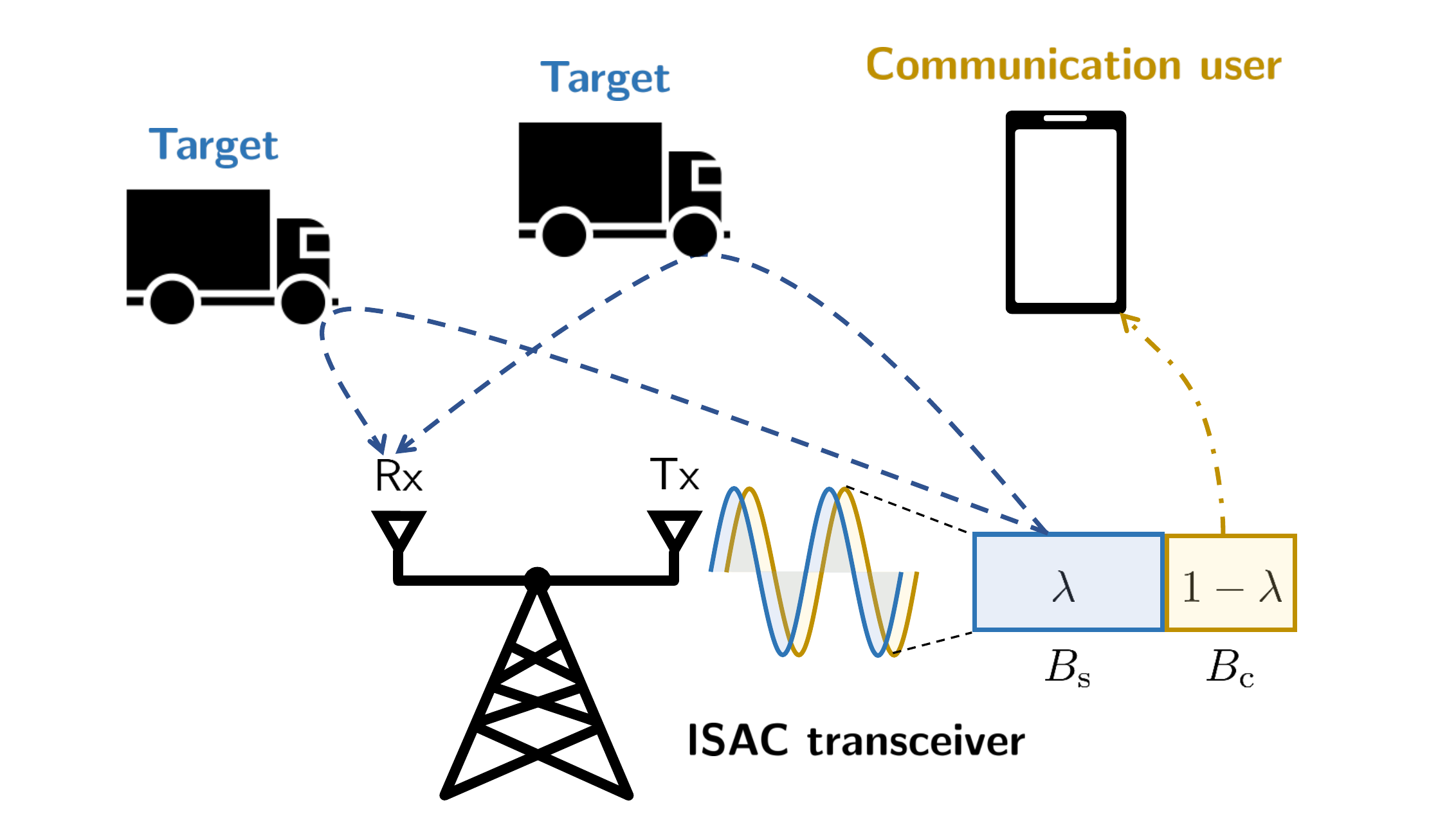}
\caption{An illustration of a ISAC system model.}
\label{fig:ISAC system model}
\end{figure}

\subsection{Pulse-Doppler Radar System Model}

We first explain the continuous-time baseband signal models and the relevant assumptions for pulse-Doppler radar systems. Then, we present the equivalent discrete-time signal models for pulse-Doppler radar systems when applying  conventional radar post-processing.     

\subsubsection{Continuous-time signal model}
The radar sends signal $x(t)$ comprised of $M$ narrowband pulses $g(t)$ with pulse repetition interval (PRI) $T_{\rm p}$ during coherent processing interval (CPI) as
\begin{align}
    x(t) = \sum_{m=0}^{M-1}g(t-mT_{\rm p}), \qquad 0\le t\le MT_{\rm p},
\end{align}
where the pulse $g(t)$ has the bandwidth of $B_{\rm s}={1} / {T_{\rm s}}$ and transmit energy $\int_{0}^{T_{\rm s}} |g(t)|^2{\rm d}t=P_{\rm t}T_{\rm s}$. This transmitted signal is reflected by targets in an unknown environment.

We adopt the Swerling I target model \cite{swerling1960probability}. This target model has been widely used to capture a static environmental scenario in the time scale of CPI. In this model, $L$ non-fluctuating point targets are well-spread during CPI, and $L$ targets are assumed to be statistically independent. Each target is defined by three parameters: i) range $d_{\ell}$, ii) radial velocity $v_{\ell}$, and iii) complex amplitude $H_{\ell}$ for $\ell\in [L]$. The range and the radial velocity of target $\ell$ correspond to round-trip time-delay $\tau_{\ell}={2d_{\ell}} / {c}$ and normalized doppler frequency $\nu_{\ell}={2 f_c v_{\ell}} / {c}$. The complex amplitude $H_{\ell}$ is assumed to be drawn from IID complex Gaussian with mean ${\mu}_{T,\ell}$ and variance ${\sigma}_{T,\ell}^2$, i.e., $H_{\ell}\sim \mathcal{CN}({ \mu}_{T,\ell}, { \sigma}_{T,\ell}^2)$ for $\ell\in [L]$.

We assume that the radar system operates under the following target assumptions: 
\begin{itemize}
\item  Assumption 1: The target is static during CPI; this condition holds when the target delays $\{\tau_{\ell}\}_{\ell=1}^L$, Doppler frequencies $\{\nu_{\ell}\}_{\ell=1}^L$, and the complex amplitude $\{H_{\ell}\}_{\ell=1}^L$ are constant during CPI. In other words, the target state changes slowly in time and frequency.
\item  Assumption 2: The $L$ delay-Doppler pairs $\{\tau_{\ell}, \nu_{\ell}\}_{\ell=1}^L$ are uniquely localized in the delay-Doppler plane defined by $[0,T_{\rm p}]\times \left[ -\frac{1}{2T_p}, \frac{1}{2T_p}\right]$. This fact implies that any two distinct targets cannot be overlapped in the plane, i.e., no ambiguity in the target presence.
\end{itemize}


Under the aforementioned assumptions, the continuous-time baseband received signal from the target reflections during CPI is given by
\begin{align}
	y(t) &= \sum_{m=0}^{M-1}\sum_{\ell=1}^{L}\sqrt{\beta_{\ell}}H_{\ell} g(t-\tau_{\ell}-mT_{\rm p})  e^{j2\pi m\nu_{\ell} t} + w(t), \label{eq:rx_signal}
\end{align}
where $w(t)$ is the complex additive white Gaussian noise with mean zero and variance $B_{\rm s}N_0$, which is defined as the product of signal bandwidth $B_{\rm s}$ and noise spectral density $N_{0}$. In addition, $\beta_{\ell}$ is the received signal power from the $\ell$th target reflection, which is defined by the classical radar equation:
\begin{align}
	\beta_{\ell} = \frac{P_{\rm t}G_{\rm t}}{4\pi d_{\ell}^2}\frac{\sigma_{\rm rcs} A_{\rm eff}}{4\pi d_{\ell}^2}, \label{eq:pathloss}
\end{align}
where $d_{\ell}= {\tau_{\ell} c} / {2}$ is the distance from the $\ell$th target to the receiver, $\sigma_{\rm rcs}$ denotes the RCS of a target measured in $m^2$, and $A_{\rm eff}$ is the effective area of the receiving antenna. 

Let $x_m(t)$ be the noise-free received signal during the $m$th PRI, which is defined as
 \begin{align}
	x_m(t) =\sum_{\ell=1}^{L}\sqrt{\beta_{\ell}}H_{\ell} g(t-\tau_{\ell}-mT_{\rm p})e^{-j2\pi m  \nu_{\ell} T_{\rm p} }  \label{eq:target_signal}
\end{align}  
for $mT_{\rm p} \leq  t < (m+1)T_{\rm p}$. This noise-free returned signal is of significance because it can be interpreted as the transmitted signal by the environment, i.e., a target signal. As a result, from a communication system perspective, the target signal plays the role of the transmitted signal by the targets. Using $x_m(t)$, we can rewrite the received signal in \eqref{eq:rx_signal} during the $m$th PRI as
\begin{align}
	y_m(t) = x_m(t) + w_m(t),  \label{eq:bbrx}
\end{align}
where $x_m(t)$ and $w_m(t)$ are the fraction of $x(t)$ and $w(t)$ for $mT_{\rm p} \leq  t < (m+1)T_{\rm p}$. We shall focus on this continuous-time baseband received signal model in the sequel.

\subsubsection{Radar post-processing}

The main task of the pulse-Doppler radar is to identify $3L$ parameters $\{\tau_{\ell},\nu_{\ell}, H_{\ell}\}_{\ell=1}^L$ from the received signal $y(t)$ during CPI as in \eqref{eq:rx_signal}.  To this end, the classical pulse-Doppler radar system performs two sequential post-processing: i) fast-time and ii) slow-time processing. Applying the post-processing, we can obtain a discrete-time equivalent signal model.
  
{\bf Fast-time processing:} 
The fast-time processing allows obtaining the discrete-time samples of the returned signal using matched filtering and sampling. After the matched filtering and sampling at the Nyquist rate $T_{\rm s}= {1} / {B_{\rm s}}$, the receiver obtains the discrete-time samples. 

Let ${\bar Y}_{n,m}$, ${\bar X}_{n,m}$, and ${\bar W}_{n,m}$ be the discrete-time received, target, and noise signals for the $n$th range bin during the $m$th PRI. Using these notations, we can rewrite the continuous-time input-output relationship in \eqref{eq:bbrx} as
\begin{align}
	{\bar Y}_{n,m} = {\bar X}_{n,m} + {\bar W}_{n,m},
\end{align} 
where ${\bar W}_{n,m}$ follows the complex Gaussian with zero-mean and variance $B_{\rm s}N_0$, i.e., ${\bar W}_{n,m}\sim\mathcal{CN}(0, B_{\rm s}N_0)$ and ${\bar X}_{n,m}$ contains the target information for the relevant parameters $\{\tau_{\ell},\nu_{\ell}, H_{\ell}\}_{\ell=1}^{L}$ as defined in \eqref{eq:target_signal}. Suppose the $\ell$th target exists in the $n$th range bin, i.e., $\left\lfloor {\tau_{\ell}} / {T_{\rm s}} \right \rfloor= n$ for $n\in \{0,\ldots,N-1\}$.  In this case,  the target signal is defined as ${\bar X}_{n,m}= 	\sqrt{\beta_{\ell}} H_{\ell}e^{-j 2\pi m \nu_{\ell} T_{\rm p}}$. Whereas, if a target is absent in the $n$th range bin, ${\bar X}_{n,m}$ is assumed to be null, i.e., ${\bar X}_{n,m}=0$.  Using unit impulse function $\delta[n]$, we can define the discrete-time target signal ${\bar X}_{n,m}$ in compact form:
\begin{align}
{\bar X}_{n,m}  =\sum_{\ell=1}^L	\sqrt{\beta_{\ell}} H_{\ell} \delta [n-U_{\ell}]   e^{-j 2\pi m \nu_{\ell} T_{\rm p}},  \label{eq:target_discrete}
\end{align}
where $U_{\ell}=\left\lfloor {\tau_{\ell}} / {T_{\rm s}} \right \rfloor \in \{0,\ldots, N-1\}$ is discrete random variable indicating the range information for the $\ell$th target.

{\bf Slow-time processing:} In the pulse-Doppler radar system, the receiver also performs the matched filtering along the pulse dimension to increase the received power of the target signals.  Recall that $0 \leq \nu_{\ell} T_{\rm p} \leq 1 $ and the Doppler resolution is ${1} / {MT_{\rm p}}$. The limited Doppler resolution allows distinguishing target Doppler frequencies in the discrete set, i.e., $\nu_{\ell} T_{\rm p} \in \left\{0,\frac{1}{M}, \ldots, \frac{M-1}{M}\right\}$.  With the discrete Doppler frequencies, the target signal in \eqref{eq:target_discrete} boils down to
	\begin{align}
		{\bar X}_{n,m}  =\sum_{\ell=1}^L\sqrt{\beta_{\ell}}	H_{\ell} \delta [n-U_{\ell}]  e^{-j 2\pi  m \frac{V_{\ell}}{M}},
	\end{align}
where $V_{\ell}\in \{0,\ldots,M-1\}$ represents the Doppler bin index of the $\ell$th target. For given range bin index $n\in \{0,1,\ldots, N-1\}$, the receiver takes an $M$-point discrete Fourier transform (DFT) operation as
 \begin{align}
	A_{n,k} &= \frac{1}{\sqrt{M}} \sum_{m=0}^{M-1}{\bar X}_{n,m}e^{-j \frac{2\pi k}{M}m} \nonumber\\
	&=\sum_{\ell=1}^L\ \sqrt{\frac{\beta_{\ell}}{M}} 	H_{\ell} \delta [n-U_{\ell}] \sum_{m=0}^{M-1} e^{-j m\left(  \frac{ 2\pi V_{\ell}}{M} - \frac{2\pi k}{M} \right)}.
\end{align}
Using the fact that $ \sum_{m=0}^{M-1} e^{-j m\left(  \frac{ 2\pi V_{\ell}}{M} - \frac{2\pi k}{M} \right)} =M\delta [ k- V_{\ell}  ]$, we can represent  the $(n,k)$ entry of the target signal as
\begin{align}
	A_{n,k} =\sum_{\ell=1}^L \sqrt{M \beta_{\ell}}  	H_{\ell} \delta [ n-U_{\ell} ] \delta [ k- V_{\ell}  ]. \label{eq:discrete_target_model}
\end{align}

\subsubsection{Discrete-time signal model}  Let ${\sf snr}_n$ be the signal-to-noise ratio (SNR) of the $n$th range bin. From the radar equation in \eqref{eq:pathloss} and the target signal model in \eqref{eq:discrete_target_model}, the SNR under the target's presence in the $n$th range bin is defined as
\begin{align}\label{eqn:radar discrete snr}
	{\sf snr}_n	&= \frac{M}{ B_{\rm s} N_o}\frac{P_{\rm t}G_{\rm t}\sigma_{\rm rcs} A_{\rm eff}}{\left(4\pi\right)^2 {\bar d}_{n}^4},
\end{align}
where ${\bar d}_{n}$ denotes the average distance from the $n$th range bin to the receiver, i.e., ${\bar d}_{n}= \frac{(2n+1)T_{\rm s} c}{4}$.   Denoting $Y_{n,k}=\frac{1}{\sqrt{M}} \sum_{m=0}^{
M-1} {\bar Y}_{n,m}e^{j \frac{ 2\pi mk}{M}  }$ and $Z_{n,k}=  \frac{1}{\sqrt{M  }}\sum_{m=0}^{M-1}{\bar W}_{n,m}e^{j \frac{ 2\pi mk}{M}  }$, the discrete-time input-output signal relationship for the $(n,k)$ range-Doppler bin is given by
\begin{align}\label{eqn:discrete_observation_model}
	Y_{n,k} = \sqrt{{\sf snr}_n}X_{n,k} + Z_{n,k}, 
\end{align} 
where $ Z_{n,k}$ is the normalized noise, which follows the complex Gaussian with zero-mean and unit variance, i.e., $ Z_{n,k}\sim \mathcal{CN}(0,1)$. From \eqref{eq:discrete_target_model}, we model that target returned signal $X_{n,k}$ is distributed as a Bernoulli-Gaussian (BG) random variable, i.e., 
\begin{align}
	f_{X_{n,k}}(x_{n,k}) = (1-\gamma) \delta(x_{n,k}) +\gamma \frac{1}{ \pi \sigma_{\sf T}^2}e^{-\tfrac{(x_{n,k}-\mu_{\sf T})^2}{\sigma_{\sf T}^2}}, \label{eqn:target signal pdf}
\end{align}
where $\gamma \in (0,1)$ determines the probability that $X_{n,k}$ is a non-zero value. We shall use a short notation $X_{n,k}\sim {\sf BG}(x|\mu_{\sf T},\sigma^2_{\sf T},\gamma)$ in the sequel.

\begin{remark} \rm
    The parameter $\gamma$ determines the sparsity level of $X_{n,k}$ in the range-Doppler plane. This parameter $\gamma$ contains a prior knowledge about environment. For instance, if the system is blind to the environment, one can choose $\gamma = 1/2$. However, when $L$ targets on average have been detected in the previous CPIs, one can choose $\gamma =\frac{L}{MN}$. In this work, we focus on the specific CPI and assume $\gamma$ to be a pre-determined parameter.
\end{remark}




\subsection{Performance Metrics}

We shall characterize the {\it sensing spectral efficiency} of the pulse-Doppler radar system. An information theoretical approach to measure the information rate between $X_{n,k}$ and $Y_{n,k}$ is given by
\begin{align}
 R_{n,k}({\sf snr}_n)=I(X_{n,k}; Y_{n,k}  )~~({\rm bits/sec/Hz} ),
\end{align}
for $n\in \{0,1,\ldots, N-1\}$ and $k\in \{0,1,\ldots, M-1\}$. It is noteworthy that the number of range bins $N$ depends on the bandwidth ratio $\lambda$ according to the Rayleigh criterion, which states that the range resolution is the inverse of the signal bandwidth \cite{richards2010principles}. This mutual information measures the amount of information acquired from the environment placed at the $(n,k)$ range-Doppler bin. Since the mutual information differs across the range bins only, we define the average sensing rate as  \begin{align}\label{eqn:sensing rate}
 R_{\rm s}(B_{\rm s}, {\sf snr}_n)=\frac{B_{\rm s}}{NM} \sum_{n=0}^{N-1}\sum_{k=0}^{M-1}R_{n,k}(	{\sf snr}_n) ~~({\rm bits/sec}).
\end{align}
This average information rate is the amount of information attained by sensing the target environment per second.

For the communication system, we consider the Gaussian channel. In the Gaussian channel, the communication rate with bandwidth $B_{\rm c}$ is given by
\begin{align}
	R_{\rm c}(B_{\rm c}, {\sf snr_{c}}) &= B_{\rm c} \log_2\left(1+{\sf snr}_{c} \right) ~~({\rm bits/sec} ), 
\end{align}
where ${\sf snr}_{\rm c}$ is the received SNR of communication system.
Intuitively, increasing bandwidth for the range-Doppler radar system can improve sensing rate $R_{\rm s}$ while decreasing communication rate $R_{\rm c}$. As a result, it is crucial to understand a trade-off between the sensing and communication rates according to the bandwidth ratio $\lambda$. To measure this trade-off, we define a weighted sum spectral efficiency as \begin{align}
	R_{\rm sum} = \frac{w_{\rm s}R_{\rm s}(B_{\rm s}, {\sf snr}_n)+ w_{\rm c}R_{\rm c}(B_{\rm c}, {\sf snr}_c)}{B_{\rm s} + B_{\rm c}} ~~({\rm bits/sec/Hz}),
	\label{eqn:weighted sum rate}
\end{align}
where $w_{\rm s}$ and $w_{\rm c}$ are weights for sensing and communication rates, respectively. 

\begin{remark} \rm
    The system weights $w_{\rm s}$ and $w_{\rm c}$ can be chosen according to priority aspects of sensing and communication functions. For instance, the communication system for emergency broadcasting system is assigned higher than system for usual data transmission. 
The intrinsic disparity of two system can be incorporated into those weights \cite{chiriyath2017radar}. The sensing system requires more power to obtain 1 bit than communication system because sensing system is regarded as uncooperative communication system. The high power demanding system weight can be assigned to be higher than other systems.  
 
\end{remark}

\section{Bayesian Target Estimation and Detection }
In this section, we first derive the MSE-optimal target estimation technique. Then, we show that both a maximum a posterior (MAP) target detector, which minimizes the average target detection error, and Neyman-Pearson detector are  special cases of the MSE-optimal target estimator.  

%



\subsection{MMSE Target Estimation}
The main task of the range-Doppler radar is to identify the non-zero entries from the measurement matrix ${\bf Y}\in \mathbb{C}^{N\times K}$, which contains the target information during CPI. Since target signals $X_{n,k}$ and  $X_{n',k'}$ are assumed to be statistically independent of $n\neq n'$ and $k\neq k'$, it is sufficient to design a target estimator for individual range-Doppler bin. 

Recall the received signal of the $(n,k)$ range-Doppler bin $Y_{n,k}=X_{n,k}+Z_{n,k}$ in \eqref{eqn:discrete_observation_model}. The mean and variance of $X_{n,k}\sim {\sf BG}(x|\mu_{\sf T},\sigma_{\sf T}^2,\gamma)$ are given by
\begin{align}
\mathbb{E}[X_{n,k}] =\gamma \mu_{\sf T}~~~~{\rm and}~~~~
{\sf Var}[X_{n,k}] =\gamma(1-\gamma)\vert\mu_{\sf T}\vert^2 +  \gamma \sigma_{\sf T}^2. \label{eq:mean_var}
\end{align}
Under the Gaussian measurement noise $Z_{n,k}\sim \mathcal{N}(0,1)$, it is evident that the posterior distribution $P_{X_{n,k}|Y_{n,k}}(x_{n,k}|y_{n,k})$ also follows a BG distribution with different parameters $({\hat \mu}_n, {\hat \sigma}_n^2, {\hat \gamma})$ because $Y_{n,k}$ is distributed as a BG random variable. The parameter ${\hat \mu}_n$ is the posterior mean conditioned that $X_{n,k}\neq 0$, which is computed as
\begin{align}
    {\hat \mu}_n&=\mathbb{E}\left[X_{n,k}|Y_{n,k}=y_{n,k}, X_{n,k}\neq 0  ; {\sf snr}_n\right] \nonumber\\
    &=\mu_{\sf T}+\frac{\sqrt{{\sf snr}_n}\sigma_{\sf T}}{1+{\sf snr}_n\sigma_{\sf T}^2 }\left(y_{n,k}-\sqrt{{\sf snr}_n} \mu_{\sf T} \right). \label{eq:cond_mean}
\end{align}
The parameter ${\hat \sigma}_n^2$ is the posterior variance conditioned that $X_{n,k}\neq 0$. This conditional variance is equivalent to the well-known MMSE under the Gaussian prior, i.e., 
\begin{align}
    {\hat \sigma}_n^2&={\sf Var}\left[X_{n,k}|Y_{n,k}, X_{n,k}\neq 0  ; {\sf snr}_n \right] \nonumber\\
    &=\frac{\sigma_{\sf T}^2}{1+{\sf snr}_n\sigma_{\sf T}^2}.\label{eq:cond_var}
\end{align}
The last parameter ${\hat \gamma}_n$ is the posterior probability of a target existence, which is given by the following lemma. 

\begin{lemma}\label{lem1} Given $Y_{n,k}=y_{n,k}$, the posterior probability that a target signal is non-zero is given by
\begin{align}
		{\hat \gamma}_n (y_{n,k})&= \mathbb{P}[X_{n,k}\neq 0 |Y_{n,k}=y_{n,k}] \nonumber\\
&= \left[1+\frac{1-\gamma}{\gamma}({\sf snr}_n\sigma_{\sf T}^2+1)e^{-\vert y_{n,k}\vert^2+\frac{\vert y_{n,k}-\sqrt{{\sf snr}_n}\mu_{\sf T}\vert^2}{{\sf snr}_n\sigma_{\sf  T}^2+1}}\right]^{-1}. \label{eq:lem1}
\end{align}
\end{lemma}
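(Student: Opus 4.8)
The plan is to compute the posterior probability directly from Bayes' rule, using the fact that $Y_{n,k}$ is a mixture of two Gaussian components: one arising when $X_{n,k}=0$ and one arising when $X_{n,k}\neq 0$. First I would write down the two conditional densities of $Y_{n,k}$. Conditioned on $X_{n,k}=0$, the observation is $Y_{n,k}=Z_{n,k}\sim\mathcal{CN}(0,1)$, so $f_{Y_{n,k}\mid X_{n,k}=0}(y_{n,k})=\tfrac{1}{\pi}e^{-|y_{n,k}|^2}$. Conditioned on $X_{n,k}\neq 0$, we have $X_{n,k}\sim\mathcal{CN}(\mu_{\sf T},\sigma_{\sf T}^2)$, so $\sqrt{{\sf snr}_n}X_{n,k}\sim\mathcal{CN}(\sqrt{{\sf snr}_n}\mu_{\sf T},{\sf snr}_n\sigma_{\sf T}^2)$ and adding the unit-variance noise gives $Y_{n,k}\mid X_{n,k}\neq 0\sim\mathcal{CN}\bigl(\sqrt{{\sf snr}_n}\mu_{\sf T},\,{\sf snr}_n\sigma_{\sf T}^2+1\bigr)$, hence $f_{Y_{n,k}\mid X_{n,k}\neq 0}(y_{n,k})=\tfrac{1}{\pi({\sf snr}_n\sigma_{\sf T}^2+1)}\exp\!\bigl(-\tfrac{|y_{n,k}-\sqrt{{\sf snr}_n}\mu_{\sf T}|^2}{{\sf snr}_n\sigma_{\sf T}^2+1}\bigr)$.

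Next I would apply Bayes' rule with prior weights $\mathbb{P}[X_{n,k}\neq 0]=\gamma$ and $\mathbb{P}[X_{n,k}=0]=1-\gamma$:
\begin{align}
{\hat\gamma}_n(y_{n,k})=\frac{\gamma\, f_{Y_{n,k}\mid X_{n,k}\neq 0}(y_{n,k})}{\gamma\, f_{Y_{n,k}\mid X_{n,k}\neq 0}(y_{n,k})+(1-\gamma)\, f_{Y_{n,k}\mid X_{n,k}=0}(y_{n,k})}.\nonumber
\end{align}
Dividing numerator and denominator by $\gamma\, f_{Y_{n,k}\mid X_{n,k}\neq 0}(y_{n,k})$ converts this into $\bigl[1+\tfrac{1-\gamma}{\gamma}\cdot\tfrac{f_{Y_{n,k}\mid X_{n,k}=0}(y_{n,k})}{f_{Y_{n,k}\mid X_{n,k}\neq 0}(y_{n,k})}\bigr]^{-1}$. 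Substituting the two densities, the $\tfrac{1}{\pi}$ factors cancel, the ratio of normalizing constants yields the factor $({\sf snr}_n\sigma_{\sf T}^2+1)$, and the exponents combine to give $\exp\!\bigl(-|y_{n,k}|^2+\tfrac{|y_{n,k}-\sqrt{{\sf snr}_n}\mu_{\sf T}|^2}{{\sf snr}_n\sigma_{\sf T}^2+1}\bigr)$, which is exactly \eqref{eq:lem1}.

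There is no real obstacle here; the only point requiring a little care is the marginalization in the $X_{n,k}\neq 0$ branch — verifying that convolving the complex Gaussian prior $\mathcal{CN}(\mu_{\sf T},\sigma_{\sf T}^2)$ (scaled by $\sqrt{{\sf snr}_n}$) with the $\mathcal{CN}(0,1)$ noise produces a $\mathcal{CN}(\sqrt{{\sf snr}_n}\mu_{\sf T},{\sf snr}_n\sigma_{\sf T}^2+1)$ density, so that the conditional law of $Y_{n,k}$ given $X_{n,k}\neq 0$ is itself Gaussian and the integral collapses to a closed form. Once that marginal density is in hand, the rest is the routine Bayes-rule bookkeeping sketched above, and the stated expression follows directly.
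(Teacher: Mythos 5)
Your proposal is correct and follows essentially the same route as the paper's own proof: both apply Bayes' rule with the prior weights $\gamma$ and $1-\gamma$, use that $Y_{n,k}\mid X_{n,k}=0\sim\mathcal{CN}(0,1)$ and $Y_{n,k}\mid X_{n,k}\neq 0\sim\mathcal{CN}(\sqrt{{\sf snr}_n}\mu_{\sf T},{\sf snr}_n\sigma_{\sf T}^2+1)$, and then simplify the resulting ratio into the stated form. The only cosmetic difference is that the paper writes the computation in terms of joint densities rather than dividing through by the non-zero branch, which is the same calculation.
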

 
\begin{proof}
 See Appendix~\ref{appendix:proof of mmse estimator}.
\end{proof}

 
The following theorem presents the MSE-optimal target estimator and the corresponding MMSE. 
 
\begin{theorem}\label{thm:mmse estimator}
The MMSE estimator for the target signal in the $(n,k)$ range-Doppler bin is given by
\begin{align}\label{eqn:mmse estimator}
    {\hat X}_{n,k}^{\sf mmse}(y_{n,k} ; {\sf snr}_n)
    &=\mathbb{E}\left[X_{n,k}|Y_{n,k}=y_{n,k}  ; {\sf snr}_n\right] \nonumber\\
    &= \frac{ \mu_{\sf T}+\frac{\sqrt{{\sf snr}_n}\sigma_{\sf T}}{1+{\sf snr}_n\sigma_{\sf T}^2 }\left(y_{n,k}-\sqrt{{\sf snr}_n} \mu_{\sf T} \right) }{1+\frac{1-\gamma}{\gamma}({\sf snr}_n\sigma_{\sf T}^2+1)e^{-\vert y_{n,k}\vert^2+\frac{\vert y_{n,k}-\sqrt{{\sf snr}_n}\mu_{\sf T}\vert^2}{{\sf snr}_n\sigma_{\sf  T}^2+1}}}.
\end{align}
The resultant MMSE is given by
\begin{align}
    {\sf mmse}_{n,k} ({\sf snr}_n)
    &= \mathbb{E}\left[\left( X_{n,k}-  {\hat X}_{n,k}^{\sf mmse}(Y_{n,k})\right)^2  ; {\sf snr}_n\right] \nonumber\\
    &= \int_{y\in \mathbb{C}} \left\{ {\hat \gamma}_n (y)(1-{\hat \gamma}_n (y))\left\vert\mu_{\sf T}+\frac{\sqrt{{\sf snr}_n}\sigma_{\sf T}}{1+{\sf snr}_n\sigma_{\sf T}^2 }\left(y-\sqrt{{\sf snr}_n} \mu_{\sf T} \right)\right\vert^2  \right. \nonumber\\
    &~~~ + \left. {\hat \gamma}_n (y)\left[\frac{\sigma_{\sf T}^2}{1+{\sf snr}_n\sigma_{\sf T}^2}\right]^2 \right\} f_{Y_{n,k}}(y)  {\rm d}y, \label{eq:mmse}
\end{align}
where  $f_{Y_{n,k}}(y) $ is a probability density function of a mixture-Gaussian random variable $Y_{n,k}$, i.e., \begin{align}
	f_{Y_{n,k}}(y) = (1-\gamma)  \frac{1}{ \pi  }e^{-|y|^2} +\gamma \frac{1}{ \pi \left(  {\sf snr}_n\sigma_{\sf T}^2+1\right)}e^{-\frac{\vert y-\sqrt{{\sf snr}_n}\mu_{\sf T}\vert^2}{ {\sf snr}_n\sigma_{\sf T}^2+1}}. \label{eqn:received signal pdf}
\end{align}

\end{theorem}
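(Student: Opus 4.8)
The plan is to derive the MMSE estimator in \eqref{eqn:mmse estimator} directly from the law of total expectation, conditioning on the target-presence event, and then compute the resulting MMSE by the standard variance decomposition. First I would write
\[
{\hat X}_{n,k}^{\sf mmse}(y_{n,k};{\sf snr}_n)=\mathbb{E}[X_{n,k}\mid Y_{n,k}=y_{n,k}]
={\hat \gamma}_n(y_{n,k})\,\mathbb{E}[X_{n,k}\mid Y_{n,k}=y_{n,k},X_{n,k}\neq 0]
+(1-{\hat \gamma}_n(y_{n,k}))\cdot 0,
\]
since $X_{n,k}=0$ on the complementary event. Substituting the conditional mean ${\hat \mu}_n$ from \eqref{eq:cond_mean} and the posterior presence probability ${\hat \gamma}_n$ from Lemma~\ref{lem1} yields exactly \eqref{eqn:mmse estimator}. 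The only facts needed here are: (i) given $X_{n,k}\neq 0$, the pair $(X_{n,k},Y_{n,k})$ is jointly Gaussian, so the conditional posterior of $X_{n,k}$ is Gaussian with the stated mean and variance \eqref{eq:cond_var} — this is the classical scalar Gaussian MMSE/Wiener formula $\mathbb{E}[X\mid Y]=\mu_{\sf T}+\frac{\sqrt{{\sf snr}_n}\sigma_{\sf T}}{1+{\sf snr}_n\sigma_{\sf T}^2}(y-\sqrt{{\sf snr}_n}\mu_{\sf T})$; and (ii) the unconditional density $f_{Y_{n,k}}$ in \eqref{eqn:received signal pdf} follows by convolving the BG prior \eqref{eqn:target signal pdf} with the $\mathcal{CN}(0,1)$ noise — the $\delta$-component becomes $\frac{1}{\pi}e^{-|y|^2}$ and the Gaussian component becomes $\mathcal{CN}(\sqrt{{\sf snr}_n}\mu_{\sf T},{\sf snr}_n\sigma_{\sf T}^2+1)$.

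For the MMSE expression \eqref{eq:mmse}, I would use the tower property to write the error as an expectation over $Y_{n,k}$ of the conditional second moment:
\[
{\sf mmse}_{n,k}({\sf snr}_n)=\mathbb{E}\!\left[\,{\sf Var}\!\left[X_{n,k}\mid Y_{n,k}\right]\right]
=\int_{\mathbb{C}}{\sf Var}\!\left[X_{n,k}\mid Y_{n,k}=y\right]f_{Y_{n,k}}(y)\,{\rm d}y .
\]
The core computation is the conditional variance ${\sf Var}[X_{n,k}\mid Y_{n,k}=y]$. Because the posterior is itself a Bernoulli--Gaussian mixture (a point mass at $0$ with probability $1-{\hat\gamma}_n(y)$ and a Gaussian $\mathcal{CN}({\hat\mu}_n,{\hat\sigma}_n^2)$ with probability ${\hat\gamma}_n(y)$), I would apply the mixture variance identity: for a two-component mixture with weights $p,1-p$ and component means $m_1,m_2$ and variances $s_1^2,s_2^2$, the overall variance is $p s_1^2+(1-p)s_2^2+p(1-p)(m_1-m_2)^2$. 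Here this gives ${\hat\gamma}_n(y){\hat\sigma}_n^2+{\hat\gamma}_n(y)(1-{\hat\gamma}_n(y))|{\hat\mu}_n|^2$, and plugging in ${\hat\sigma}_n^2=\frac{\sigma_{\sf T}^2}{1+{\sf snr}_n\sigma_{\sf T}^2}$ and the expression for ${\hat\mu}_n$ produces the integrand of \eqref{eq:mmse} verbatim.

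The main obstacle — though it is more of a bookkeeping hazard than a genuine difficulty — is making the conditional-variance decomposition of the Bernoulli--Gaussian posterior rigorous and correctly accounting for the complex-Gaussian normalization (factors of $\pi$, and the fact that for circularly symmetric complex Gaussians the "variance" parameter already plays the role of the full second central moment). I would take care to state the posterior of $X_{n,k}$ given $Y_{n,k}=y$ explicitly as ${\hat\gamma}_n(y)\,\mathcal{CN}({\hat\mu}_n,{\hat\sigma}_n^2)+(1-{\hat\gamma}_n(y))\,\delta(\cdot)$ — which requires verifying via Bayes' rule that conditioning on $Y_{n,k}=y$ does not alter the Gaussian \emph{shape} of the non-null component, only reweights it, a point that Lemma~\ref{lem1} effectively already establishes — and then the mixture identity applies cleanly. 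A minor secondary point is confirming that the ${\hat\mu}_n$ appearing inside the MMSE integral matches \eqref{eq:cond_mean} once one notes $y$ is the dummy integration variable; no new estimate is needed beyond what Lemma~\ref{lem1} and the Gaussian Wiener formula supply.
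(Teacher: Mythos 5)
Your proposal is correct and follows essentially the same route as the paper: the paper's (one-line) proof likewise treats the posterior of $X_{n,k}$ given $Y_{n,k}=y$ as a Bernoulli--Gaussian mixture with parameters $({\hat \mu}_n,{\hat \sigma}_n^2,{\hat \gamma}_n)$ from \eqref{eq:cond_mean}, \eqref{eq:cond_var}, and Lemma~\ref{lem1}, and plugs them into the BG mean/variance identities in \eqref{eq:mean_var}, which is exactly your total-expectation plus mixture-variance argument followed by averaging over $f_{Y_{n,k}}$. The only caveat is your word ``verbatim'': your (correct) derivation yields the variance term ${\hat \gamma}_n(y)\,\sigma_{\sf T}^2/(1+{\sf snr}_n\sigma_{\sf T}^2)$ to the first power, so the squared bracket in \eqref{eq:mmse} appears to be a typo in the printed statement rather than something your argument reproduces.
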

 
\begin{proof}
The proof is direct by invoking \eqref{eq:cond_mean}, \eqref{eq:cond_var}, and \eqref{eq:lem1} into \eqref{eq:mean_var}.   \end{proof}

The derived MMSE estimator in Theorem 1 guarantees the optimal reconstruction quality in terms of MSE if the target signal distribution follows a BG distribution,  $X_{n,k}\sim {\sf BG}(x|\mu_{\sf T},\sigma_{\sf T}^2,\gamma)$. The noticeable difference with the classical MMSE estimator for the Gaussian prior is a multiplicative parameter by ${\hat \gamma_n}$ defined in \eqref{eq:lem1}, which captures the chance to detect a target for given measurement $y_{m,k}$. The derived MMSE estimator can also be interpreted with a lens through a denoising function in imaging processing. The measurement data matrix ${\bf Y} \in \mathbb{C}^{N\times M}$ can be regarded as containing a  two-dimensional complex target image ${\bf X} \in \mathbb{C}^{N\times M}$. The leading role of the MMSE estimator is to reduce noise signals; thereby, it constructs a more clear target image than ${\bf Y}$. This role differs from classical radar detectors, whose primary goal is to detect the target presence in a range-Doppler plane.

\subsection{Connection to Classical Radar Detectors}

To provide a more clear understanding of the derived MMSE radar detector, it is instructive to compare it with two classical radar detectors: i) MAP and ii) Neyman-Pearson (NP) detectors. Let us consider a binary target detector $g(\cdot):\mathbb{C}\rightarrow \{0,1\}$ which maps received signal $Y_{n,k}$ to a binary value according to the target presence. Armed with this binary detector, we define two types of probability errors: i) false alarm probability $P_{\sf FA}$ and misdetection probability $P_{\sf MD}$ as
\begin{align}
    P_{\sf FA}({\sf snr}_{n}) &= \mathbb{P}\left[g(Y_{n,k}) = 1 \mid X_{n,k} = 0 \right] \text{ and } P_{\sf MD}({\sf snr}_{n})  = \mathbb{P}\left[g(Y_{n,k})= 0 \mid X_{n,k} \neq 0 \right].
\end{align}
Then, an average probability error is given by
\begin{align}
P_{\sf E}(\gamma, {\sf snr}_{n })  = \gamma P_{\sf MD}({\sf snr}_{n })  + (1-\gamma) P_{\sf FA}({\sf snr}_{n}).
\end{align}

The MAP detector that minimizes the average probability error, $g_{\sf MAP}(\cdot) = \arg\min_{g(\cdot)} P_{\sf E}$, is given by 
\begin{align}
    g_{\sf MAP}(Y_{n,k}) &= \begin{cases}
    0, & \frac{\mathbb{P}(X_{n,k} \neq  0 \vert Y_{n,k})}{\mathbb{P}(X_{n,k} = 0 \vert Y_{n,k})}   < 1, \\
    1, & \frac{\mathbb{P}(X_{n,k} \neq  0 \vert Y_{n,k})}{\mathbb{P}(X_{n,k} = 0 \vert Y_{n,k})}   > 1.
    \end{cases} \label{eqn:MAP detector}
\end{align}
From Lemma 1, the ratio of the posterior probabilities can be expressed as $\frac{\mathbb{P}(X_{n,k} \neq  0 \vert Y_{n,k}=y_{n,k})}{\mathbb{P}(X_{n,k} = 0 \vert Y_{n,k}=y_{n,k})} =\frac{{\hat \gamma}_n (y_{n,k})}{1-{\hat \gamma}_n (y_{n,k})}$. Using this ratio, the MAP decision rule boils down to
\begin{align}
    g_{\sf MAP}(Y_{n,k}) &= \begin{cases}
    0, & \frac{{\hat \gamma}_n (Y_{n,k})}{1-{\hat \gamma}_n (Y_{n,k})} < 1, \\
    1, & \frac{{\hat \gamma}_n (Y_{n,k})}{1-{\hat \gamma}_n (Y_{n,k})} > 1.
    \end{cases} \label{eqn:MAP detector 2}
\end{align}
When establishing the MAP detector,  the posterior probability of the target presence ${\hat \gamma}_n (y_{n,k})$ in \eqref{eq:lem1} is sufficient information. The MMSE estimator, however, requires additional information about ${\hat \mu}_n$ in \eqref{eq:cond_mean} and ${\hat \sigma}_n^2$ in \eqref{eq:cond_var} beyond ${\hat \gamma}_n (y_{n,k})$. This fact reveals that the classical MAP detector is a special case of the MMSE estimator when discarding information about ${\hat \mu}_n$  and ${\hat \sigma}_n^2$.

%

\begin{figure}[t]
\centering
\includegraphics[width=0.8\columnwidth]{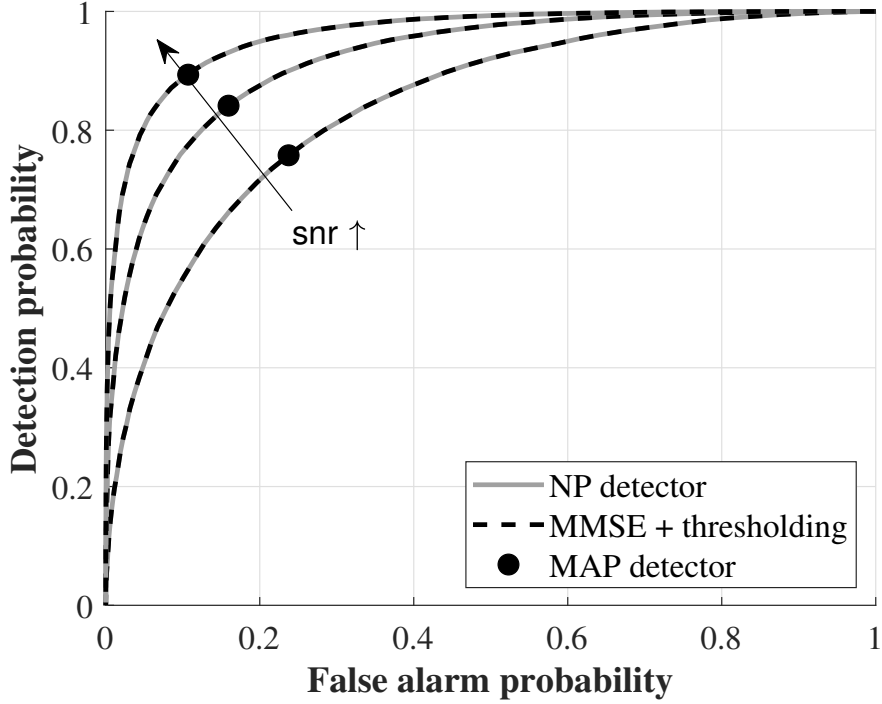}
\caption{ROC curves for three different types of detectors: i) NP, ii) MMSE estimator with hard threshold, and iii) MAP detectors. Target parameters are set to be $\gamma = 0.5$, $\mu_{\sf T} = 1$, $\sigma_{\sf T}^2 = 10^{-1}$, ${\sf snr}_n \in \{0, 3, 5\}$ [dB].}
\label{fig:ROC}
\end{figure}

Unlike the Bayesian approach using a prior distribution, a classical detector in radar theory does not exploit prior information of the target signal because of the absence of such knowledge in general.  According to the celebrated NP rule, the optimal binary detector $g_{\sf NP}(Y_{n,k})$ with threshold $\delta$ can be defined with the likelihood ratios as 
\begin{align}
    g_{\sf NP}(Y_{n,k}) &= \begin{cases}
    0, & \frac{\mathbb{P}(Y_{n,k} \vert X_{n,k} \neq 0)}{\mathbb{P}(Y_{n,k} \vert X_{n,k} = 0)} < \delta, \\
    1, & \frac{\mathbb{P}(Y_{n,k} \vert X_{n,k} \neq 0)}{\mathbb{P}(Y_{n,k} \vert X_{n,k} = 0)} > \delta.
    \end{cases} \label{eqn:NP detector}
\end{align}
The threshold parameter $\delta$ can be chosen to minimize $P_{\sf FA}$ subject to a constraint of $P_{\sf MD}=\alpha$ with $\alpha>0$, which is referred to as NP detector. This NP detector can be deduced from the posterior probability of the target presence ${\hat \gamma}_n (y_{n,k})$ in \eqref{eq:lem1}. By applying Bayes' rule to likelihood ratio to obtain the identity $\frac{\mathbb{P}(Y_{n,k} \vert X_{n,k} \neq 0)}{\mathbb{P}(Y_{n,k} \vert X_{n,k} = 0)} = \frac{\mathbb{P}(X_{n,k} \neq  0 \vert Y_{n,k})}{\mathbb{P}(X_{n,k} = 0 \vert Y_{n,k})}\frac{\mathbb{P}(X_{n,k} = 0)}{\mathbb{P}(X_{n,k} \neq 0)} = \frac{{\hat \gamma}_n (Y_{n,k})}{1-{\hat \gamma}_n (Y_{n,k})}\frac{1-\gamma}{\gamma}$, the NP criterion reduces to 
\begin{align}
    g_{\sf NP}(Y_{n,k}) &= \begin{cases}
    0, & \frac{{\hat \gamma}_n (Y_{n,k})}{1-{\hat \gamma}_n (Y_{n,k})} < \frac{\gamma}{1-\gamma}\delta , \\
    1, & \frac{{\hat \gamma}_n (Y_{n,k})}{1-{\hat \gamma}_n (Y_{n,k})} > \frac{\gamma}{1-\gamma}\delta.
    \end{cases} \label{eqn:NP detector 2}
\end{align}
In addition, it is worthwhile mentioning that the MAP detector in \eqref{eqn:MAP detector 2} is equivalent to the NP detector if the threshold parameter is chosen as $\delta = (1-\gamma) / \gamma$.


To find the error probability expressions, we define the region as
\begin{align}
    \mathcal{A}({\sf snr}_n) &= \left\{\, y_{n,k}\in\mathbb{C} \,\middle|\, {\hat \gamma}_n(y_{n,k}) (1-\gamma) > (1-{\hat \gamma}_n(y_{n,k})) \,\gamma \delta \,\right\}.
\end{align}
Then, the false alarm probability $P_{\sf FA}$ and misdetection probability $P_{\sf MD}$ when using the NP detector are given by
\begin{align}
    P_{\sf FA}({\sf snr}_n) &= \int_{y\in\mathcal{A}({\sf snr}_n)} \frac{1}{\pi}e^{-|y|^2} {\rm d}y,\\
    P_{\sf MD}({\sf snr}_n) &= \int_{y\notin\mathcal{A}({\sf snr}_n)} \frac{1}{\pi \left({\sf snr}_n\sigma_{\sf T}^2 + 1\right)}e^{-\frac{\vert y - \sqrt{{\sf snr}_n}\mu_{\sf T}\vert^2}{{\sf snr}_n\sigma_{\sf T}^2 + 1}} {\rm d} y.
\end{align}
To clearly show the generality of the MMSE estimator, we compare the proposed MMSE estimator with the classical NP detector and MAP detector using a receiver operating characteristic (ROC) curve. As shown in Fig.~\ref{fig:ROC}, our MMSE detector with a proper choice threshold can achieve identical ROC curves with the NP detector. This result shows that the MMSE detector is deeply connected to the classical NP detector. At the same time, it helps to define the sensing rate from an information-theoretical viewpoint.

\section{Sensing Rate Characterization}\label{sec:sensing rate}
In this section, we express the sensing rate in terms of estimation-theoretic quantity by applying the I-MMSE relationship in \cite{guo2005mutual}. To provide a better understanding of the sensing rate in terms of target parameters, we present a lower bound of the sensing rate by decomposing it into two parts. Furthermore, we derive approximation of sensing rates with a closed-form. These tractable expressions will be stepping stones toward finding the optimal bandwidth strategy for ISAC systems in the sequel. 

\subsection{An Exact Expression}
The I-MMSE relationship is a foundation that provides a deep connection between an information- and an estimation-theoretic quantity \cite{guo2005mutual}.  The I-MMSE relationship states that the integration of MMSE is the mutual information for an arbitrary target distribution $f_{X_{n,k}}(x_{n,k})$ with a finite moment $\mathbb{E}\left[X_{n,k}^2\right]< \infty$ \cite{guo2005mutual}.  From this statement and the derived MMSE in \eqref{eq:mmse}, we can formally define the sensing rate of a pulse-Doppler radar system.


\begin{theorem} \label{thm2}
The sensing rate $R_{n,k}({\sf snr}_n)=I(X_{n,k};Y_{n,k})$ for $(n,k)$ range-Doppler bin is given by
\begin{align}
    R_{n,k}({\sf snr}_n)     &=  \log_2e \int_0^{{\sf snr}_n}    {\sf mmse}_{n,k} ( s)  {\rm d}s. \label{eq:exact} 
\end{align}

\end{theorem}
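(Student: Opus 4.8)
The plan is to obtain \eqref{eq:exact} as a direct consequence of the I-MMSE relationship of Guo, Shamai, and Verd\'u \cite{guo2005mutual} applied to the per-bin complex Gaussian channel \eqref{eqn:discrete_observation_model}. First I would verify the hypothesis required to invoke that relationship, namely that the target input has finite second moment. From \eqref{eq:mean_var}, $\mathbb{E}\!\left[|X_{n,k}|^2\right] = {\sf Var}[X_{n,k}] + \left|\mathbb{E}[X_{n,k}]\right|^2 = \gamma|\mu_{\sf T}|^2 + \gamma\sigma_{\sf T}^2 < \infty$, which holds for every admissible $(\gamma,\mu_{\sf T},\sigma_{\sf T})$, so the identity is applicable with no further restriction.

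Next, I would regard $Y_{n,k}(s) = \sqrt{s}\,X_{n,k} + Z_{n,k}$ with $Z_{n,k}\sim\mathcal{CN}(0,1)$ as a one-parameter family of channels indexed by the SNR level $s \ge 0$, and write $I_{n,k}(s) = I\!\left(X_{n,k}; Y_{n,k}(s)\right)$ in nats. The MMSE of this channel at SNR $s$ is exactly the expression in \eqref{eq:mmse} with ${\sf snr}_n$ replaced by $s$, since Theorem~\ref{thm:mmse estimator} is stated with the SNR as a free parameter; denote it ${\sf mmse}_{n,k}(s)$. The complex-valued I-MMSE identity then states $\frac{{\rm d}}{{\rm d}s} I_{n,k}(s) = {\sf mmse}_{n,k}(s)$ for all $s>0$.

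Integrating this over $[0,{\sf snr}_n]$ and using the boundary value $I_{n,k}(0) = 0$ — which holds because at $s=0$ the observation is pure noise and hence independent of $X_{n,k}$ — gives $I_{n,k}({\sf snr}_n) = \int_0^{{\sf snr}_n}{\sf mmse}_{n,k}(s)\,{\rm d}s$ in nats. Converting to bits by multiplying through by $\log_2 e$ yields \eqref{eq:exact}, because $R_{n,k}({\sf snr}_n) = I(X_{n,k};Y_{n,k})$ is measured in bits per channel use.

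The only point requiring care is the bookkeeping of the multiplicative constant. Because the channel is complex-valued (equivalently, two real dimensions with matched noise), the I-MMSE derivative carries no extra factor of $1/2$, in contrast with the real scalar statement; the subsequent nats-to-bits conversion is then the sole source of the $\log_2 e$ prefactor. Everything else reduces to a routine invocation of \cite{guo2005mutual}, with no new analytic estimates needed beyond the finite-moment check and the $s\to 0^+$ limit. An alternative, self-contained route would differentiate $I_{n,k}(s)$ under the integral sign using $f_{Y_{n,k}}$ in \eqref{eqn:received signal pdf} and identify the derivative with \eqref{eq:mmse} by hand, but this merely re-proves the I-MMSE identity in this special case and offers no additional insight.
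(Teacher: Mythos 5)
Your proposal is correct and follows the same route as the paper, which likewise obtains \eqref{eq:exact} as a direct consequence of Theorem~\ref{thm:mmse estimator} and the I-MMSE relationship of \cite{guo2005mutual}; your additional checks (finite second moment of the Bernoulli-Gaussian input, the $I_{n,k}(0)=0$ boundary value, and the absence of a $1/2$ factor for the complex channel with the $\log_2 e$ conversion) simply make explicit the bookkeeping the paper leaves implicit.
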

\begin{proof}
The proof is the direct consequence from MMSE in Theorem 1 and the I-MMSE relationship in \cite{guo2005mutual}. 
\end{proof}
From Theorem~\ref{thm2}, it is evident that the information rate is an increasing function with ${\sf snr}_n$ by the non-negativity of ${\sf mmse}_{n,k} (s)$ for $s>0$. Although the expression in \eqref{eq:exact} is exact and has full generality, this analytical expression of the sensing rate is intractable to understanding how the target parameters, $(\mu_{\sf T}, \sigma_{\sf T}, \gamma)$, change the information rate.


 \subsection{A Lower Bound}

To shed light on the insight in Theorem~\ref{thm2},  we provide a lower bound of the sensing rate in the following Theorem. 
\begin{theorem} \label{thm3}
Let ${\sf H}_b(p) = -p\log p -(1-p)\log(1-p)$ be a binary entropy function with parameter $p\in [0,1]$. Then, the sensing rate for the $(n,k)$ range-Doppler bin $I(X_{n,k}; Y_{n,k})$ is lower bounded by
\begin{align}\label{eqn:radar lower}
    I(X_{n,k};Y_{n,k}) &\ge {\sf H}_b (\gamma) - {\sf H}_{\sf NP}(\gamma, P_{\sf FA}, P_{\sf MD}) + \gamma\log_2(1+{\sf snr}_n\sigma_{\sf T}^2),
\end{align}
where $P_{\sf FA}$ and $P_{\sf MD}$ is the false alarm and misdetection error probability of \eqref{eqn:NP detector 2} and
\begin{align}
    {\sf H}_{\sf NP}(\gamma, P_{\sf FA}, P_{\sf MD}) = {\sf H}_b(\gamma) + (1-\gamma){\sf H}_b(P_{\sf FA}) + \gamma{\sf H}_b(P_{\sf MD}) - {\sf H}_b\left((1-\gamma)P_{\sf FA} + \gamma(1-P_{\sf MD})\right).
\end{align}

\end{theorem}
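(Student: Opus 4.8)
The plan is to lower bound the mutual information $I(X_{n,k};Y_{n,k})$ by introducing an auxiliary binary random variable that records whether a target is present, and then invoke the data-processing/chain-rule structure together with the hard-decision rule $g_{\sf NP}(\cdot)$ from \eqref{eqn:NP detector 2}. Concretely, let $S_{n,k} = \mathbf{1}\{X_{n,k}\neq 0\}\sim{\sf Bern}(\gamma)$ be the target-presence indicator, and let $\hat S_{n,k} = g_{\sf NP}(Y_{n,k})$ be its NP estimate from the observation. Since $S_{n,k}$ is a deterministic function of $X_{n,k}$, the chain rule gives $I(X_{n,k};Y_{n,k}) = I(S_{n,k};Y_{n,k}) + I(X_{n,k};Y_{n,k}\mid S_{n,k})$. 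I would bound the two terms separately: the first term captures the ``target location'' information and the second the ``target fluctuation/RCS'' information, matching the two-part decomposition advertised in the section preamble.

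For the first term, I would use $I(S_{n,k};Y_{n,k}) \ge I(S_{n,k};\hat S_{n,k})$ by the data-processing inequality, since $\hat S_{n,k}$ is a function of $Y_{n,k}$. Then $I(S_{n,k};\hat S_{n,k}) = {\sf H}_b(\gamma) - H(S_{n,k}\mid \hat S_{n,k})$, and I would expand $H(S_{n,k}\mid\hat S_{n,k})$ in terms of the joint distribution of $(S_{n,k},\hat S_{n,k})$, whose four entries are exactly $(1-\gamma)(1-P_{\sf FA})$, $(1-\gamma)P_{\sf FA}$, $\gamma P_{\sf MD}$, $\gamma(1-P_{\sf MD})$. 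A direct computation — writing $H(S,\hat S) = H(\hat S) + H(S\mid\hat S)$ and also $H(S,\hat S)=H(S)+H(\hat S\mid S)$ — yields $H(S_{n,k}\mid\hat S_{n,k}) = {\sf H}_{\sf NP}(\gamma,P_{\sf FA},P_{\sf MD}) - {\sf H}_b(\gamma) + {\sf H}_b(\gamma)$; more carefully, using $H(\hat S\mid S) = (1-\gamma){\sf H}_b(P_{\sf FA}) + \gamma{\sf H}_b(P_{\sf MD})$ and $H(\hat S) = {\sf H}_b((1-\gamma)P_{\sf FA}+\gamma(1-P_{\sf MD}))$, one gets $I(S_{n,k};\hat S_{n,k}) = {\sf H}_b(\gamma) - {\sf H}_{\sf NP}(\gamma,P_{\sf FA},P_{\sf MD})$ after the algebra collapses, which is precisely the first two terms of \eqref{eqn:radar lower}. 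This step is routine bookkeeping with the binary-entropy identities.

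For the second term, conditioned on $S_{n,k}=0$ we have $X_{n,k}=0$ deterministically, so it contributes nothing; conditioned on $S_{n,k}=1$, $X_{n,k}\sim\mathcal{CN}(\mu_{\sf T},\sigma_{\sf T}^2)$ and $Y_{n,k}=\sqrt{{\sf snr}_n}X_{n,k}+Z_{n,k}$ is a standard complex Gaussian channel, whose mutual information is exactly $\log_2(1+{\sf snr}_n\sigma_{\sf T}^2)$ (the mean $\mu_{\sf T}$ is a known shift and does not affect mutual information). Weighting by $\mathbb{P}[S_{n,k}=1]=\gamma$ gives $I(X_{n,k};Y_{n,k}\mid S_{n,k}) = \gamma\log_2(1+{\sf snr}_n\sigma_{\sf T}^2)$, the last term of \eqref{eqn:radar lower} — and in fact this holds with equality, so the only inequality used is the single data-processing step for $\hat S_{n,k}$.

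The main obstacle I anticipate is purely notational rather than conceptual: verifying that the four-term combination defining ${\sf H}_{\sf NP}$ is exactly $H(S_{n,k}\mid\hat S_{n,k})$, i.e. correctly identifying which binary-entropy terms come from $H(\hat S\mid S)$ versus $H(\hat S)$ and checking the signs so that $I(S;\hat S) = {\sf H}_b(\gamma) - {\sf H}_{\sf NP}$ rather than some rearrangement. Care is also needed to state precisely why $\hat S_{n,k}=g_{\sf NP}(Y_{n,k})$ is admissible in the data-processing inequality (it is a — possibly randomized, but here deterministic — function of $Y_{n,k}$ alone, independent of $X_{n,k}$ given $Y_{n,k}$), and to note that the bound holds for \emph{any} threshold $\delta$, hence for the particular NP choice referenced in the statement. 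None of these require heavy computation, so I would present the chain-rule split, the DPI step, the entropy bookkeeping, and the Gaussian-channel evaluation in that order.
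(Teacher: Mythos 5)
Your proposal is correct and follows essentially the same route as the paper: split off the target-presence indicator via the chain rule (the paper writes $X_{n,k}$ as the pair $(U_{n,k},A_{n,k})$, you condition on $S_{n,k}=\mathbf{1}\{X_{n,k}\neq 0\}$, which is equivalent), apply the data-processing inequality with the NP detector output to get the binary-asymmetric-channel mutual information ${\sf H}_b(\gamma)-{\sf H}_{\sf NP}$, and evaluate the conditional Gaussian-channel term as $\gamma\log_2(1+{\sf snr}_n\sigma_{\sf T}^2)$. The entropy bookkeeping you flag works out exactly as you describe ($H(S_{n,k}\mid\hat S_{n,k})={\sf H}_{\sf NP}$), so there is no gap.
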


\begin{proof}
 We first define a binary random variable $U_{n,k}\in \{0,1\}$ with Bernoulli parameter $\gamma$ to denote the target presence at the $(n,k)$ range-Doppler bin. Similarly, we introduce the target complex amplitude parameter $A_{n,k}$, which is drawn from $\mathcal{CN}(\mu_{\sf T},\sigma_{\sf T}^2).$ Then, our BG target signal $X_{n,k}\sim {\sf BG}(x|\mu_{\sf T},\sigma_{\sf T}^2,\gamma)$ can be represented in a vector form $(U_{n,k},A_{n,k})$. Using the chain rule, we decompose the mutual information as
\begin{align}
    I(X_{n,k} ; Y_{n,k}) &= I\left(U_{n,k}, A_{n,k}; Y_{n,k}\right) \nonumber \\
    &= I(U_{n,k} ; Y_{n,k}) + I\left(A_{n,k} ; Y_{n,k} \vert U_{n,k}\right) \nonumber\\
    &= I(U_{n,k}; Y_{n,k}) + (1-\gamma) I\left(A_{n,k}; Y_{n,k} \vert U_{n,k}=0\right) + \gamma I\left(A_{n,k}; Y_{n,k} \vert U_{n,k}=1\right) \nonumber \\
    &= I(U_{n,k}; Y_{n,k}) + \gamma \log_{2}\left(1+{\sf snr}_n\sigma_{\sf T}^2\right), \label{eqn:sensing rate decomposition}
\end{align}
where the last equality comes from $I(A_{n,k}; Y_{n,k} \vert U_{n,k}=0) = 0$.

We consider the NP detector to obtain the lower bound of $I(U_{n,k}; Y_{n,k})$. Let ${\hat Y}_{n,k}$ be the output of the NP detector. Then, we can form the Markov chain $U_{n,k} \rarr Y_{n,k} \rarr {\hat Y}_{n,k}$. From the processing inequality, we have  $I(U_{n,k};Y_{n,k})\ge I(U_{n,k};{\hat Y}_{n,k})$. By noticing that $U_{n,k}$ and ${\hat Y}_{n,k}$ are the input and output of a binary asymmetric channel, we can lower bound the target detection information as
\begin{align}
    I(U_{n,k}; Y_{n,k}) &\ge I(U_{n,k}; {\hat Y}_{n,k}) \nonumber \\
    &= {\sf H}_b\left((1-\gamma)P_{\sf FA} + \gamma(1-P_{\sf MD})\right) - (1-\gamma){\sf H}_b(P_{\sf FA}) - \gamma{\sf H}_b(P_{\sf MD}).
    \label{eq: target detection lower}
\end{align}
Plugging \eqref{eq: target detection lower} into \eqref{eqn:sensing rate decomposition}, we obtained the expression, which completes the proof.

 
\end{proof}





\begin{figure} 
    \centering
    \subfloat[]{%
    \includegraphics[width=0.33\columnwidth]{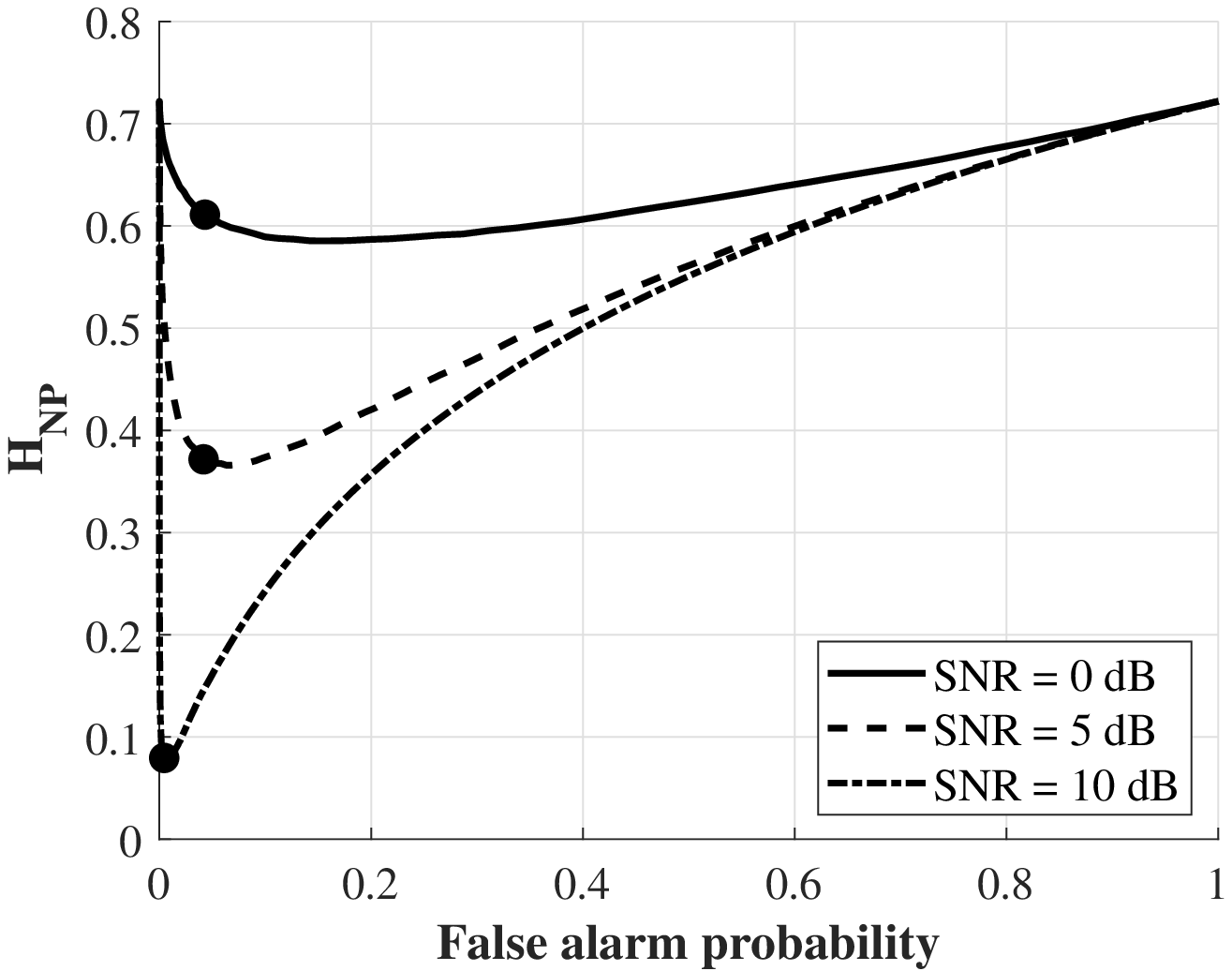}}
    \hfill
    \subfloat[]{%
        \includegraphics[width=0.33\columnwidth]{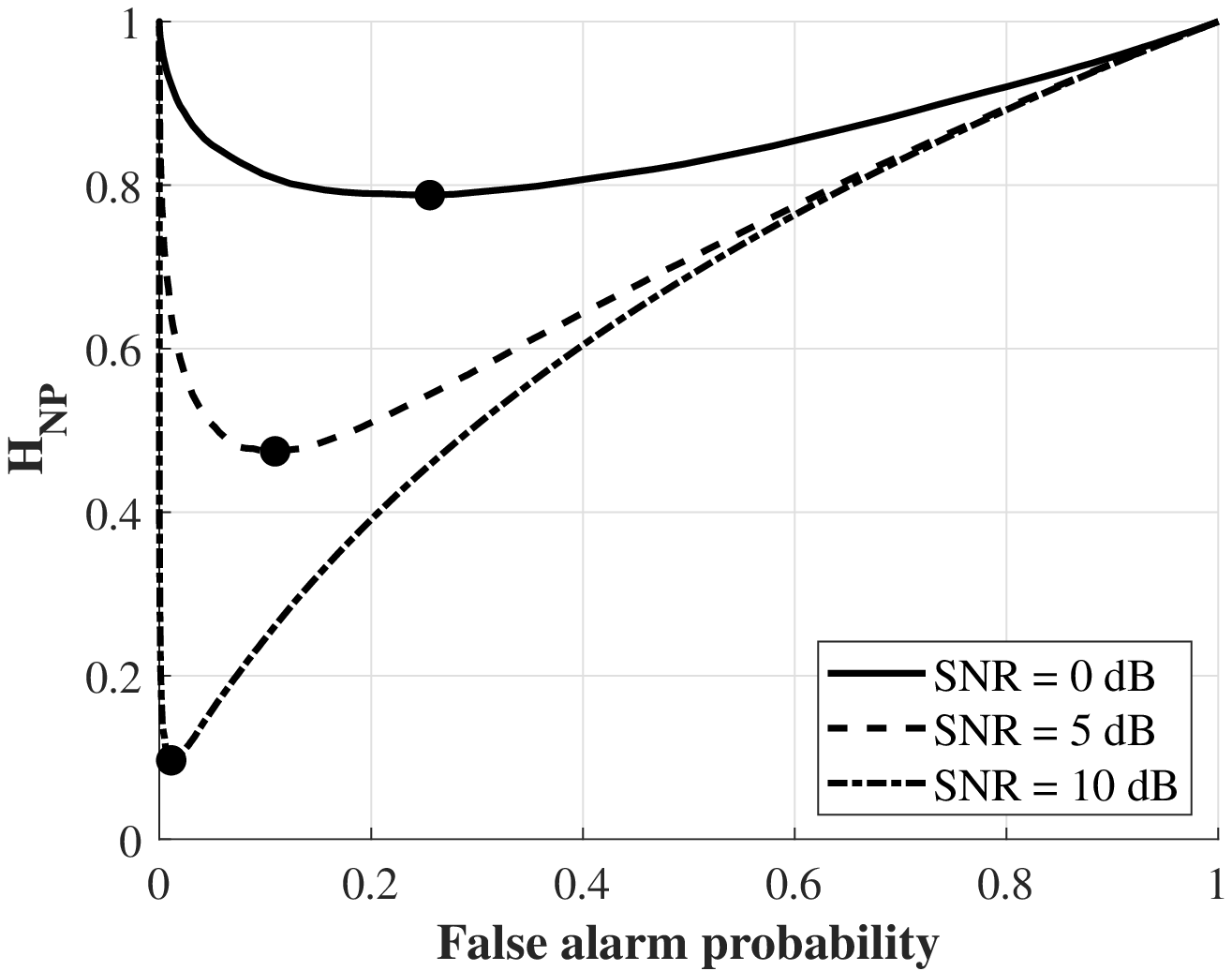}}
    \hfill
    \subfloat[]{%
        \includegraphics[width=0.33\columnwidth]{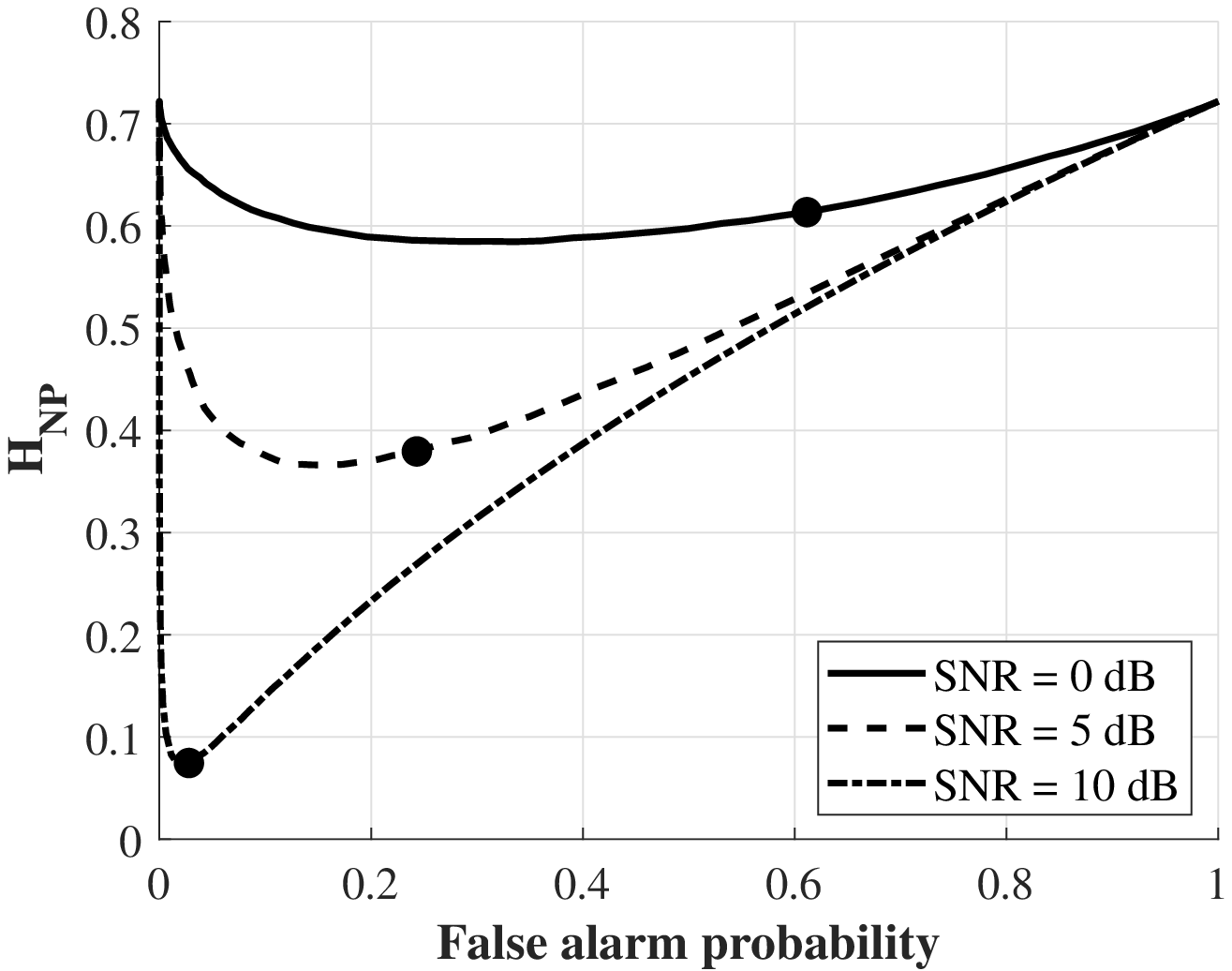}}
    \caption{
    Remaining uncertainty $H_{\sf NP}(\gamma, P_{\sf FA}, P_{\sf MD})$ after detection and false alarm probability $P_{\sf FA}$ when applying NP detector. The circle marker is obtained by MAP detector. Target statistics: $\mu_{\sf T} = 1$, $\sigma_{\sf T}^2 = 0$, ${\sf snr}_n \in \{0, 5, 10\}$ [dB].
    (a) $\gamma = 0.2$,
    (b) $\gamma = 0.5$,
    (c) $\gamma = 0.8$.
    }
\label{fig:detector and entropy}
\end{figure}

Theorem \ref{thm3} elucidates how the sensing rate changes according to the target parameters and target detector. To be specific, the sensing rate can be decomposed into three parts. The first part ${\sf H}_b(\gamma)$ measures the uncertainty of the target localized at a specific ($n,k$) range-Doppler bin. In other words, this part provides the target's location and velocity information in a two-dimensional data matrix. This information rate increases with the underlying target presence probability $\gamma$ in the system, which aligns with our intuition that more target presence likelihood increases the sensing rate. This information rate, however, is reduced by the target detection error, ${\sf H}_{\sf NP}(\gamma, P_{\sf FA}, P_{\sf MD})$. As a result, the sensing rate can diminish as the error probability increases (e.g., low SNR or using sub-optimal detectors) as shown in Fig.~\ref{fig:detector and entropy}.  The last part $ \gamma \log_{2}\left(1+{\sf snr}_n\sigma_{\sf T}^2\right)$ measures the amount of information rate acquired from the target's returned signal under the target presence with probability $\gamma$. This rate is proportional to both the target's presence probability $\gamma$ and the target signal variance $\sigma_{\sf T}^{2}$ for given SNR.

\subsection{Approximation of Sensing Rate}

Although the lower bound in Theorem \ref{thm3} is useful in understanding the sensing rate in terms of the relevant system parameters, it is still unwieldy to compute the sensing rate because the detection error probability computation involves some numerical integrals. 
To remedy this issue,  we provide a tight approximate expression of the sensing rate in closed-form, which is stated in the following theorem. 

\begin{theorem}  \label{thm:sensing rate approximation}
The sensing rate for the $(n,k)$ range-Doppler bin $R_{n,k}({\sf snr}_n)   =I(X_{n,k}; Y_{n,k})$ is tightly approximated as
\begin{align}
    R_{n,k}({\sf snr}_n) &\approx {\sf H}_b(\gamma) - {\hat H}(T \vert Y) + \gamma\log_2\left(1+{\sf snr}_n \sigma_{\sf T}^2\right),
\end{align}
where the approximated conditional entropy ${\hat H}(T\vert Y)$ is given by
\begin{align}
    {\hat H}(T\vert Y) &= (1-\gamma)\log_2\left(1+\frac{\gamma}{1-\gamma}\frac{2}{2+{\sf snr}_n\sigma_{\sf T}^2}e^{-\frac{{\sf snr}_n \vert \mu_{\sf T}\vert^2}{2+{\sf snr}_n\sigma_{\sf T}^2}}\right) \nonumber \\ 
    &\hspace{3em} + \gamma\log_2\left(1+\frac{1-\gamma}{\gamma}\frac{2(1+{\sf snr}_n\sigma_{\sf T}^2)}{2+{\sf snr}_n\sigma_{\sf T}^2}e^{-\frac{{\sf snr}_n \vert \mu_{\sf T}\vert^2}{2 + {\sf snr}_n\sigma_{\sf T}^2}}\right).
\end{align}
\end{theorem}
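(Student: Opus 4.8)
The plan is to reduce the statement to an approximation of the posterior entropy of the target--presence bit, and then to collapse the likelihood ratio to a deterministic constant so that everything that remains is a Gaussian integral. First I would reuse the chain--rule decomposition already carried out in the proof of Theorem~\ref{thm3}: letting $U_{n,k}\in\{0,1\}$ be the target--presence indicator with $\mathbb{P}[U_{n,k}=1]=\gamma$ (this is the ``$T$'' appearing in the statement) and $A_{n,k}\sim\mathcal{CN}(\mu_{\sf T},\sigma_{\sf T}^2)$ the amplitude, the same steps give $I(X_{n,k};Y_{n,k})=I(U_{n,k};Y_{n,k})+\gamma\log_2(1+{\sf snr}_n\sigma_{\sf T}^2)$, hence $R_{n,k}({\sf snr}_n)={\sf H}_b(\gamma)-H(U_{n,k}\mid Y_{n,k})+\gamma\log_2(1+{\sf snr}_n\sigma_{\sf T}^2)$. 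Since the fluctuation term already matches the claimed expression verbatim, it remains only to produce the approximation $\hat H(T\mid Y)\approx H(U_{n,k}\mid Y_{n,k})$.

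Next I would put $H(U_{n,k}\mid Y_{n,k})=\mathbb{E}_{Y_{n,k}}[{\sf H}_b(\hat\gamma_n(Y_{n,k}))]$ into a two--term form by expanding $-\log_2$ of the two posterior masses from Lemma~\ref{lem1}. Writing $f_0=\mathcal{CN}(0,1)$ and $f_1=\mathcal{CN}(\sqrt{{\sf snr}_n}\mu_{\sf T},\,1+{\sf snr}_n\sigma_{\sf T}^2)$ for the two components of $f_{Y_{n,k}}$ in \eqref{eqn:received signal pdf}, and $\Lambda(y)=f_1(y)/f_0(y)$ for the likelihood ratio, one gets the exact identity
\begin{align}
    H(U_{n,k}\mid Y_{n,k})=(1-\gamma)\,\mathbb{E}_{f_0}\!\left[\log_2\!\Big(1+\tfrac{\gamma}{1-\gamma}\Lambda(Y)\Big)\right]+\gamma\,\mathbb{E}_{f_1}\!\left[\log_2\!\Big(1+\tfrac{1-\gamma}{\gamma}\tfrac{1}{\Lambda(Y)}\Big)\right].
\end{align}
The approximation step then replaces the stochastic ratio by a constant: under $f_0$ I would replace $\Lambda(Y)$ by the deterministic value $C:=\big(\int f_0 f_1\big)\big/\big(\int f_0^2\big)=\mathbb{E}_{f_0}[f_1(Y)]/\mathbb{E}_{f_0}[f_0(Y)]$, i.e.\ the scalar for which $Cf_0$ is the $L^2$--best fit to $f_1$, so that the first expectation collapses to $(1-\gamma)\log_2(1+\tfrac{\gamma}{1-\gamma}C)$; symmetrically, under $f_1$ replace $1/\Lambda(Y)$ by $C':=\big(\int f_0 f_1\big)\big/\big(\int f_1^2\big)$, so the second expectation collapses to $\gamma\log_2(1+\tfrac{1-\gamma}{\gamma}C')$.

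Finally I would evaluate the three Gaussian integrals in closed form by completing the square (a product of two complex Gaussians integrates to a Gaussian): $\int f_0 f_1=\tfrac{1}{\pi(2+{\sf snr}_n\sigma_{\sf T}^2)}\exp\!\big(-{\sf snr}_n|\mu_{\sf T}|^2/(2+{\sf snr}_n\sigma_{\sf T}^2)\big)$, $\int f_0^2=\tfrac{1}{2\pi}$, and $\int f_1^2=\tfrac{1}{2\pi(1+{\sf snr}_n\sigma_{\sf T}^2)}$. These give $C=\tfrac{2}{2+{\sf snr}_n\sigma_{\sf T}^2}\exp\!\big(-{\sf snr}_n|\mu_{\sf T}|^2/(2+{\sf snr}_n\sigma_{\sf T}^2)\big)$ and $C'=(1+{\sf snr}_n\sigma_{\sf T}^2)C$, which, substituted into the two collapsed terms, reproduce exactly the stated $\hat H(T\mid Y)$; adding back $\gamma\log_2(1+{\sf snr}_n\sigma_{\sf T}^2)$ closes the argument.

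The only non--mechanical part, and the main obstacle, is the middle step. The expectation $\mathbb{E}_{f_0}[\log_2(1+\tfrac{\gamma}{1-\gamma}\Lambda(Y))]$ has no closed form, and the naive Jensen bound is vacuous here because $\mathbb{E}_{f_0}[\Lambda(Y)]=1$ forces it straight back to ${\sf H}_b(\gamma)$; so all the content lies in choosing the ``effective'' likelihood ratio so that (i) both expectations close in the stated form and (ii) the replacement stays accurate over the relevant range of $({\sf snr}_n,\mu_{\sf T},\sigma_{\sf T}^2,\gamma)$. Matching the density overlap and energy integrals is what accomplishes this, but because this is an approximation rather than an inequality, the write--up should make explicit that tightness is being asserted (and corroborated by the simulations) rather than established through an error bound.
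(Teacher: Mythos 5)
Your derivation is correct and lands exactly on the stated formula, but it reaches it by a different route than the paper. The paper works on the output entropy: it lower-bounds the Gaussian-mixture entropy $H(Y)$ by Jensen's inequality through the pairwise overlap integrals $\epsilon_{i,j}=\int p_i p_j$ (the bound of \cite{kolchinsky2017estimating}), then subtracts $H(Y\vert X)$, and finally adds a constant offset $\log(e/2)$ obtained by forcing the expression to be exact in the two SNR limits of the special case $\gamma=1/2$, $\sigma_{\sf T}^2=0$ (the BPSK analogy). You instead start from the Theorem~\ref{thm3} decomposition $I=\mathsf{H}_b(\gamma)-H(U\vert Y)+\gamma\log_2(1+{\sf snr}_n\sigma_{\sf T}^2)$, write $H(U\vert Y)$ exactly as two expectations of $\log_2\bigl(1+c\,\Lambda^{\pm1}(Y)\bigr)$, and replace the likelihood ratio by the constants $\epsilon_{01}/\epsilon_{00}$ and $\epsilon_{01}/\epsilon_{11}$. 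These are in fact algebraically the same approximation: since $\epsilon_{ii}=1/(2\pi\sigma_i^2)$, your substitution gives
\begin{equation*}
  \hat H(T\vert Y)=\mathsf{H}_b(\gamma)+\textstyle\sum_i w_i h(p_i)-\log_2\!\left(\tfrac{e}{2}\right)-H_{\sf Low}(Y),
\end{equation*}
so the Jensen step and the paper's offset $\log(e/2)$ are absorbed into your single "effective likelihood ratio" replacement, and no separate asymptotic calibration is needed. What each buys: your version keeps the location/fluctuation decomposition explicit and is mechanically self-contained, while the paper's version makes visible where the approximation error lives (the Jensen gap) and why a constant correction is plausible (exactness at both SNR extremes of the BPSK case). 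On rigor the two are on par — neither supplies an error bound, the claim of tightness rests on the numerical results — and you state this limitation explicitly, which is appropriate.
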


\begin{proof}
See Appendix~\ref{appendix:proof of sensing rate approximation}.
\end{proof}

As shown in Theorem~\ref{thm:sensing rate approximation}, the approximate rate expression can be composed into three parts as derived in Theorem 3. The difference is the second part, conditional entropy. Although this expression is an approximation of the conditional entropy, it still has full generality to capture the effect of target parameters $(\gamma, \mu_{\sf T}, \sigma_{\sf T}^2)$. In addition, the closed-form expression drastically reduces the computation time required for the evaluation of the sensing rate by facilitating the bandwidth allocation problem for the ISAC system in the next section.

\section{Optimal Bandwidth Allocation for ISAC}\label{sec:ISAC optim}

In this section, we consider the bandwidth allocation problem for the ISAC system. Recall that the weighted sum spectral efficiency in \eqref{eqn:weighted sum rate} involves the trade-off in terms of the bandwidth ratio $\lambda$ because the radar bandwidth $B_{\rm s}$ and the communication bandwidth $B_{\rm c}$ is a function of ratio $\lambda$. In particular, the signal-to-noise ratio of each system (i.e., ${\sf snr}_n$ and ${\sf snr}_c$) relies on the allocated bandwidth $B_{\rm s}$ and $B_{\rm c}$, or ratio $\lambda$. In addition, the number of range bins $N$ is given by the inverse of the radar bandwidth $B_{\rm s}$. For simplicity, we assume that the target parameters $(\gamma, \mu_{\sf T}, \sigma_{\sf T}^2)$ and transmit power $P_{\rm t}$ are constants when allocating the bandwidth, and introduce the following notations to express the dependency on $\lambda$ explicitly:
\begin{align}
    {\sf snr}_{c}(\lambda) &= \frac{\rho_c}{1-\lambda} = \frac{P_{\rm t}}{(1-\lambda)B N_0}, \\
    {\sf snr}_{n}(\lambda) &= \frac{\rho_n}{\lambda} = \frac{M}{ \lambda B N_o}\frac{P_{\rm t}G_{\rm t}\sigma_{\rm rcs} A_{\rm eff}}{\left(4\pi\right)^2 {\bar d}_{n}^4}, \\
    R_{\rm c}(\lambda) &= (1-\lambda) B \log_2\left(1+\frac{\rho_c}{1-\lambda}\right),\\
    R_{\rm s}(\lambda) &= \frac{\lambda B}{N(\lambda)K} \sum_{n=0}^{N(\lambda)-1}\sum_{k=0}^{M-1}R_{n,k}({\sf snr}_n(\lambda) ), \\
    R_{\rm sum}(\lambda) &= \frac{w_{\rm s}R_{\rm s}(\lambda) + w_{\rm c}R_{\rm c}(\lambda)}{B}. \label{eqn:weighted sum rate isac} 
\end{align}

We aim to allocate the bandwidth to maximize the weighted sum spectral efficiency in \eqref{eqn:weighted sum rate isac} by solving the following optimization problem:
\begin{align}
    \lambda^\star &= \operatorname*{argmax}_{0\le \lambda \le 1} R_{\rm sum}(\lambda).
\end{align}
Unfortunately, the objective function $R_{\rm sum}(\lambda)$ is not differentiable for $\lambda$ because the number of addition in $R_{\rm s}(\lambda)$ depends on $\lambda$, which hinders the development of efficient optimization algorithm. 
To circumvent this difficulty and to get an insight into the bandwidth allocation result, we assume that all the range-Doppler bins have the same ${\sf snr}_{s}(\lambda)$. By choosing ${\sf \overline{snr}}_{s}(\lambda) = {\sf snr}_{N-1}(\lambda)$ (i.e., the smallest SNR), we can lower bound the sensing rate in the objective function as
\begin{align}\label{eqn:sensing rate proxy}
    R_{\rm s}(\lambda) &= \frac{\lambda B}{N(\lambda)M}\sum_{n=0}^{N(\lambda)-1}\sum_{k=0}^{M-1} R_{n,k}({\sf snr}_n(\lambda)) \nonumber\\
    &\ge \frac{\lambda B}{N(\lambda)M}\sum_{n=0}^{N(\lambda)-1}\sum_{k=0}^{M-1}R_{n,k}({\sf snr}_{N-1}(\lambda)) \nonumber\\
    &= \lambda B R_{n,k}({\sf \overline{snr}}_{s}(\lambda)) \nonumber\\
    &\overset{\triangle}{=} {\bar R}_{\rm s}(\lambda) 
\end{align}
If we define ${\bar R}_{\rm sum}(\lambda) = w_{\rm s}{\bar R}_{\rm s}(\lambda) + w_{\rm c} R_{\rm c}(\lambda)$, then we have the following bandwidth allocation problem for the ISAC system:
\begin{align}
    {\bar \lambda}^{\star} = \operatorname*{argmax}_{0\le\lambda\le 1} {\bar R}_{\rm sum}(\lambda). \label{eqn:bandwidth allocation simplified}
\end{align}

Now, we present the Karush-Kuhn-Tucker (KKT) optimality conditions for the optimization problem in \eqref{eqn:bandwidth allocation simplified}.
\begin{theorem}\label{thm:bandwidth allocation kkt}
The KKT condition of \eqref{eqn:bandwidth allocation simplified} is satisfied if we choose ${\bar\lambda}^{\star} \in [0, 1]$ such that
\begin{equation}\label{eqn:kkt1}
    w_{\rm s} \left. \frac{d{\bar R}_{\rm s}(\lambda)}{d\lambda}\right \vert_{\lambda = {\bar\lambda}^{\star}}  + w_{\rm c} \left. \frac{dR_{\rm c}(\lambda)}{d\lambda}\right \vert_{\lambda = {\bar\lambda}^{\star}}
    = 0.
\end{equation}
If there is no such ${\bar\lambda}^{\star}\in [0, 1]$, then it is given by 
\begin{align}\label{eqn:kkt2}
    {\bar\lambda}^{\star} = \begin{cases}
    1, &   w_{\rm s} \frac{d{\bar R}_{\rm s}(\lambda)}{d\lambda} > -w_{\rm c}  \frac{dR_{\rm c}(\lambda)}{d\lambda}, \\
    0, &   w_{\rm s} \frac{d{\bar R}_{\rm s}(\lambda)}{d\lambda} < -w_{\rm c}  \frac{dR_{\rm c}(\lambda)}{d\lambda}.
    \end{cases}
\end{align}
In addition, ${\bar R}_{\rm s}(\lambda)$ is increasing concave function with respect to $\lambda$ and $R_{\rm c}(\lambda)$ is monotonic decreasing concave function with respect to $\lambda$.

\end{theorem}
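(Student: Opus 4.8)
The plan is to deduce the KKT characterization from two structural facts that are themselves part of the theorem: that $\bar R_{\rm s}(\lambda)$ is increasing and concave and that $R_{\rm c}(\lambda)$ is decreasing and concave on $[0,1]$. Granting these, $\bar R_{\rm sum}=w_{\rm s}\bar R_{\rm s}+w_{\rm c}R_{\rm c}$ is concave on $[0,1]$ (a nonnegative combination of concave functions, as $w_{\rm s},w_{\rm c}\ge 0$), and \eqref{eqn:kkt1}--\eqref{eqn:kkt2} is then exactly the first-order optimality condition for maximizing a concave function over the interval $[0,1]$. So essentially all of the work lies in the two shape claims.

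First I would record the behaviour of $R_{n,k}(\cdot)$ as a function of its SNR argument. By Theorem~\ref{thm2}, $R_{n,k}(s)=\log_2 e\int_0^s {\sf mmse}_{n,k}(\xi)\,d\xi$, so $R_{n,k}(0)=0$, $R_{n,k}$ is nondecreasing by nonnegativity of the MMSE, and $R_{n,k}'(s)=\log_2 e\cdot{\sf mmse}_{n,k}(s)$. Invoking the monotonicity of the MMSE in the SNR from \cite{guo2005mutual}, $R_{n,k}'$ is nonincreasing, hence $R_{n,k}$ is concave on $[0,\infty)$ with $R_{n,k}(0)=0$. For the sensing term, \eqref{eqn:sensing rate proxy} gives $\bar R_{\rm s}(\lambda)=B\,\lambda\,R_{n,k}({\sf \overline{snr}}_{s}(\lambda))$ with ${\sf \overline{snr}}_{s}(\lambda)=\bar\rho_s/\lambda$ for the constant $\bar\rho_s:=\rho_{N-1}$, so $\bar R_{\rm s}(\lambda)=B\,\lambda\,R_{n,k}(\bar\rho_s/\lambda)$. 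I would recognize $\lambda\mapsto\lambda R_{n,k}(\bar\rho_s/\lambda)$ as the perspective of the concave function $R_{n,k}$, hence concave on $(0,1]$; for monotonicity, $\bar R_{\rm s}'(\lambda)=B\big[R_{n,k}(\bar\rho_s/\lambda)-(\bar\rho_s/\lambda)R_{n,k}'(\bar\rho_s/\lambda)\big]$, and the first-order concavity inequality for $R_{n,k}$ with base point $x=\bar\rho_s/\lambda$ and comparison point $0$ yields $R_{n,k}(x)-xR_{n,k}'(x)\ge R_{n,k}(0)=0$, so $\bar R_{\rm s}$ is increasing. For the communication term, write $R_{\rm c}(\lambda)=B\,h(1-\lambda)$ with $h(u)=u\log_2(1+\rho_c/u)$; $h$ is the perspective of the concave map $x\mapsto\log_2(1+x)$, hence concave, so $R_{\rm c}$ is concave (concave composed with an affine map), and $h'(u)=\frac{1}{\ln 2}\big[\ln(1+\rho_c/u)-\tfrac{\rho_c/u}{1+\rho_c/u}\big]>0$ by the elementary bound $\ln(1+x)>x/(1+x)$ for $x>0$, so $h$ is increasing and $R_{\rm c}$ is decreasing in $\lambda$.

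Finally, since $\bar R_{\rm sum}$ is concave on $(0,1)$, its derivative $\bar R_{\rm sum}'=w_{\rm s}\bar R_{\rm s}'+w_{\rm c}R_{\rm c}'$ is nonincreasing. If $\bar R_{\rm sum}'$ vanishes at some $\bar\lambda^\star\in(0,1)$, that point is a global maximizer and satisfies \eqref{eqn:kkt1}; if it never vanishes on $(0,1)$, monotonicity forces it to keep one sign, placing the maximizer at $\lambda=1$ when the derivative is positive throughout and at $\lambda=0$ when it is negative throughout, which is \eqref{eqn:kkt2}.

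The main obstacle I anticipate is the endpoint behaviour rather than the concavity bookkeeping: the perspective derivatives $\bar R_{\rm s}'(\lambda)$ and $R_{\rm c}'(\lambda)$ can diverge as $\lambda\to 0^+$ and $\lambda\to 1^-$, so the boundary cases in \eqref{eqn:kkt2} must be phrased through one-sided limits of the monotone derivative, and one should first confirm that $\bar R_{\rm sum}$ extends continuously and finitely to $\lambda\in\{0,1\}$ — e.g. $R_{\rm c}(1)=0$ and $\bar R_{\rm s}(0)=0$ from the perspective limits — before declaring an endpoint optimal. A secondary, smaller point requiring care is the appeal to the SNR-monotonicity of ${\sf mmse}_{n,k}(\cdot)$, which underpins the concavity of $R_{n,k}$ and hence every downstream claim.
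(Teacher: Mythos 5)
Your proposal is correct and follows essentially the same route as the paper: show that $w_{\rm s}{\bar R}_{\rm s}(\lambda)$ is increasing concave and $w_{\rm c}R_{\rm c}(\lambda)$ is decreasing concave on $[0,1]$, so that ${\bar R}_{\rm sum}$ is concave and the stationarity condition \eqref{eqn:kkt1} (or the endpoint rule \eqref{eqn:kkt2}) characterizes the global maximizer, with the sensing-side argument resting, as in the paper, on the I-MMSE identity and the monotonic decrease of ${\sf mmse}_{n,k}$ in SNR. The only difference is presentational: where the paper computes first and second derivatives of $f_1$ and $f_2$ explicitly, you obtain concavity via the perspective-function characterization and monotonicity via the first-order concavity inequality $R_{n,k}(x)-xR_{n,k}'(x)\ge R_{n,k}(0)=0$, which is the paper's appeal to the I-MMSE corollary in disguise.
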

\begin{proof}
See Appendix~\ref{appendix:proof of bandwidth allocation kkt}
\end{proof}


According to Theorem~\ref{thm:bandwidth allocation kkt}, the optimal bandwidth allocation ratio is determined by comparing the derivative of the rates with respect to $\lambda$. For a given ratio $\lambda$, the radar uses more bandwidth ${\rm d}\lambda$ if the increased information of the radar system is larger than the decreased information of the communication system when we set the ratio to be $\lambda + {\rm d}\lambda$. To provide a better understanding of Theorem~\ref{thm:bandwidth allocation kkt}, we offer the estimation theoretic interpretation of the change of the radar information when we allocate infinitesimal additional bandwidth to the radar system. By the I-MMSE relationship, we can express the derivative of the sensing rate as
\begin{align}
    \frac{d R_{n,k}(\overline{\sf snr}_{s}(\lambda))}{d ~ \overline{\sf snr}_{s}(\lambda)} &= {\sf mmse}_{n,k}(\overline{\sf snr}_{s}(\lambda)) ~ \log_{2}{e}.
\end{align}
Using this, we can rewrite the KKT condition as
\begin{align}
    w_{\rm s} \left( R_{n,k}(\overline{\sf snr}_{s}(\lambda)) - \overline{\sf snr}_{s}(\lambda) \cdot {\sf mmse}_{n,k}(\overline{\sf snr}_{s}(\lambda)) \log_{2}{e} \right) = -\frac{w_{\rm c}}{B} \frac{d R_{\rm c}(\lambda)}{d\lambda}. \label{eqn:rewritten kkt}
\end{align}
As a result, the estimation quality of the radar system measured in terms of MMSE is compared with the reduction of the communication information rate when we determine the optimal bandwidth that maximizes the weighted sum information rates. 
In conclusion, it is more efficient for a radar system to use more bandwidth than the communication system in the following two cases: i) the radar estimation error (i.e., MMSE) compared to $\overline{\sf snr}_{s}(\lambda)$ is small enough to compensate the decline of the information rate of the communication system, or ii) the information rate of the radar system is large enough to tolerate a significant estimation error. Finally, we put some remarks about Theorem~\ref{thm:bandwidth allocation kkt}.

\begin{remark}[Bisection method for optimal bandwidth allocation] \rm
Because the left side of \eqref{eqn:kkt1} is decreasing with respect to $\lambda$, the optimal bandwidth allocation ratio $\lambda^{\star}$ can be found by bisection method.
\end{remark}
\begin{remark}[Approximation strategy] \rm
To evaluate the KKT condition in \eqref{eqn:kkt1}, we need to compute complicated integral in \eqref{eq:mmse} and \eqref{eq:exact}, which is time-consuming. Instead, we can use the approximation expression in Theorem~\ref{thm:sensing rate approximation} to compute ${\bar R}_{\rm s}(\lambda)$.
\end{remark}

\section{Numerical results}\label{sec:simulation}
This section presents the numerical simulation results of the sensing rate and ISAC bandwidth allocation problem. 

\subsection{Sensing Rate}
In this subsection, we verify the Theorem~\ref{thm3} and \ref{thm:sensing rate approximation}, each of which characterizes the lower bound and approximation of the sensing rate. In addition, we present the operational meaning of the sensing rate by using the decomposed expression to obtain a more detailed understanding. 

{\bf Tightness of the lower bound and approximation:}
We provide numerical results to verify the exactness of the derived rate expression in Theorem~\ref{thm3} and Theorem~\ref{thm:sensing rate approximation}. Fig.~\ref{fig:approximation tightness} shows that our lower bound and approximation expressions are very tight at all SNRs. This result confirms that our closed-form expression for the sensing rate is exact. In addition, we can use the approximation expression to find the optimal bandwidth allocation strategy for the ISAC system. 

\begin{figure} 
    \centering
    \subfloat[]{%
       \includegraphics[width=0.49\columnwidth]{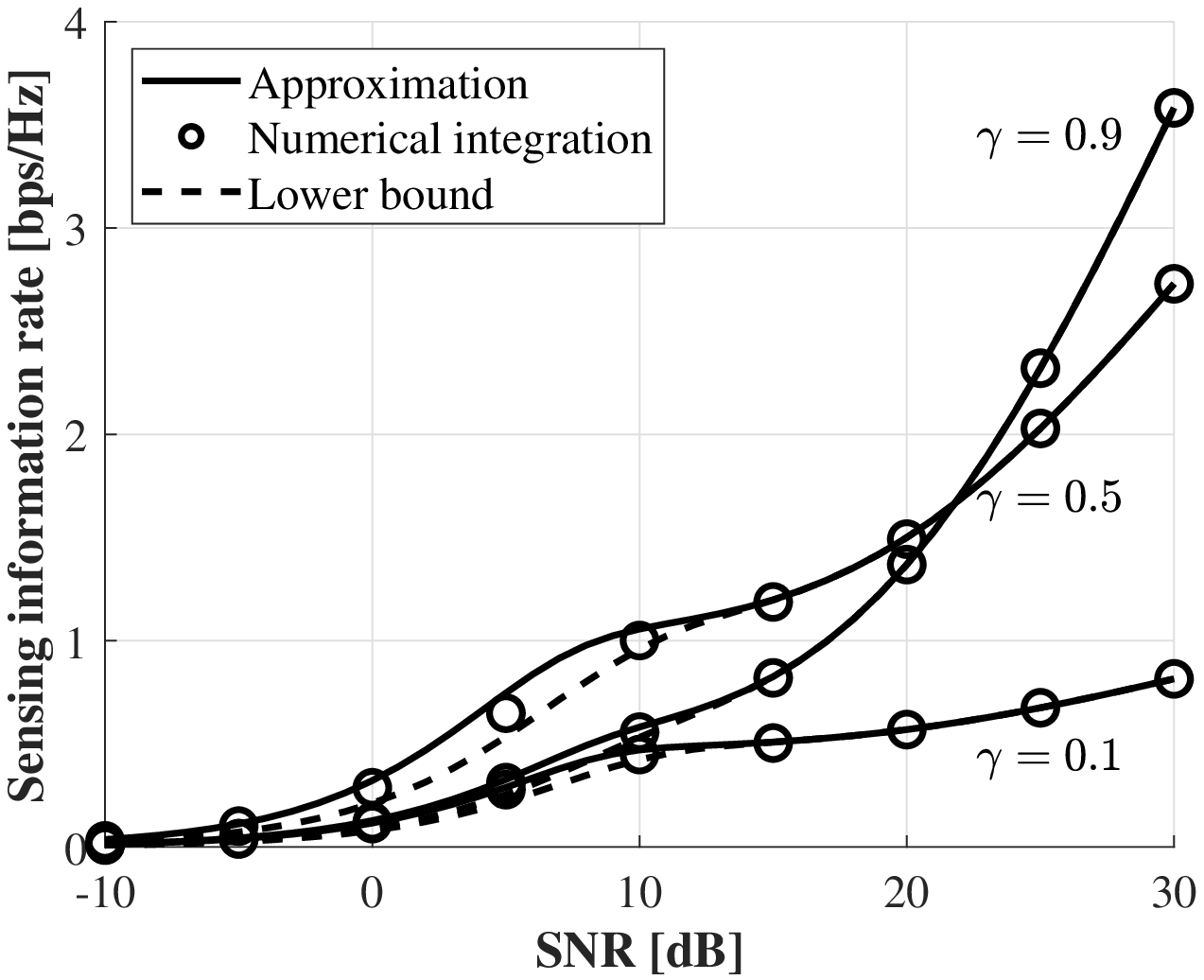}}
    \hfill
    \subfloat[]{%
        \includegraphics[width=0.49\columnwidth]{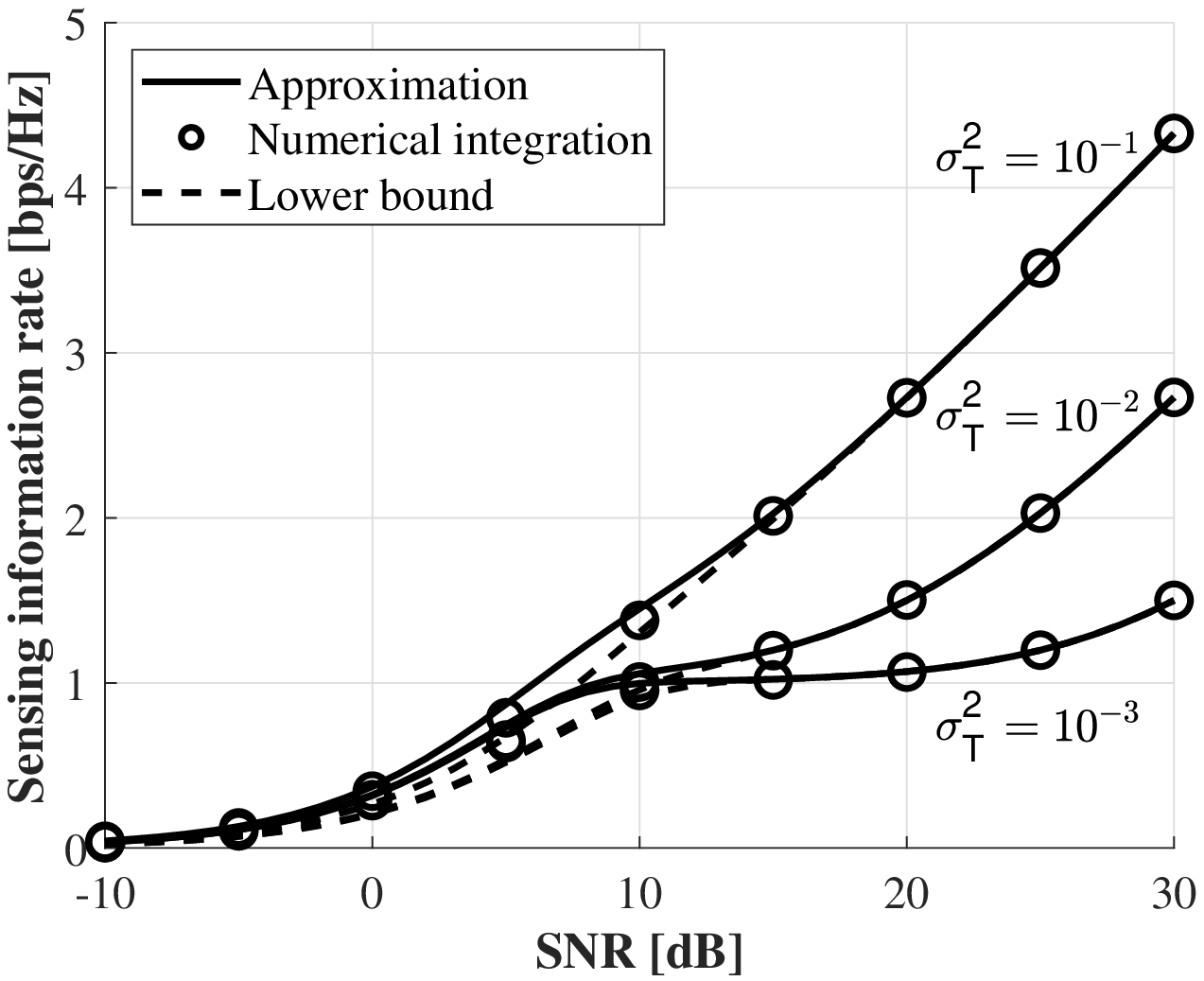}}
    \caption{
    Illustration to verify the tightness of the lower bound and approximation in Theorem~\ref{thm3} and \ref{thm:sensing rate approximation}.
    Target signal statistics: 
    (a) $\gamma \in \{0.1, 0.5, 0.9\}$, $\mu_{\sf T} = 1$, and $\sigma_{\sf T}^2 = 10^{-2}$,
    (b) $\gamma = 0.5$, $\mu_{\sf T} = 1$, and $\sigma_{\sf T}^2 \in \{10^{-1}, 10^{-2}, 10^{-3}\}$.}
    \label{fig:approximation tightness} 
\end{figure}

{\bf The decomposed information:}
To shed light on the behavior of sensing rate, we provide numerical results of {\it target detection information} and {\it target fluctuation information} separately in Fig.~\ref{fig:decomposition}. Recall that the target detection mutual information is $I(U_{n,k}; Y_{n,k})$ and the target fluctuation mutual information is $I(A_{n,k}; Y_{n,k} \vert U_{n,k} = 1)$.

Fig.~{\ref{fig:decomposition}-(a)} presents the decomposed information corresponding to various $\gamma$. We observe that the target detection information is dominant in the low SNR regime, and the target fluctuation information is dominant in the high SNR regime. It means that the target detection ability is important at low SNR. At a high SNR regime, the target signal estimation precision contributes to the sensing rate rather than target detection accuracy. Fig.~{\ref{fig:decomposition}-(b)} is decomposed information displayed by switching $\sigma_{\sf T}^{2}$. The target detection information decreases as the target fluctuation $\sigma_{\sf T}^{2}$ increases because the fluctuation hinders the target detection. Especially for MAP detector in \eqref{eqn:MAP detector 2}, this fluctuation acts as additive noise; thereby, the probability of detection error increases. Since the probability of detection error is related to conditional entropy $H(U_{n,k} \vert Y_{n,k})$, the increased detection error yields the increased conditional entropy, and finally renders the decreased target detection information rate. As the target variance increases, the system can obtain more fluctuation information, and it originates from the increased target signal uncertainty.


\begin{figure} 
    \centering
    \subfloat[]{%
       \includegraphics[width=0.49\columnwidth]{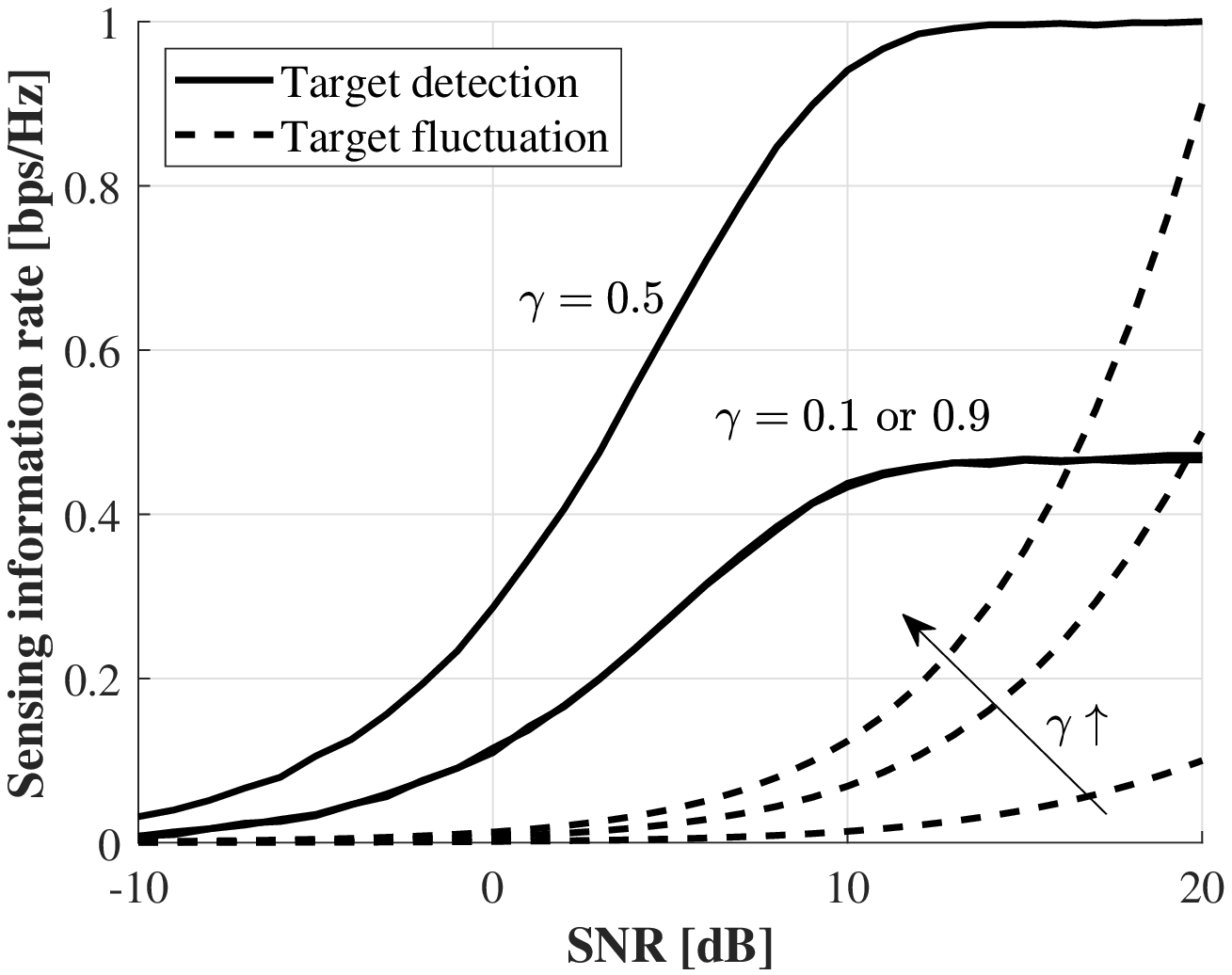}}
    \hfill
    \subfloat[]{%
        \includegraphics[width=0.49\columnwidth]{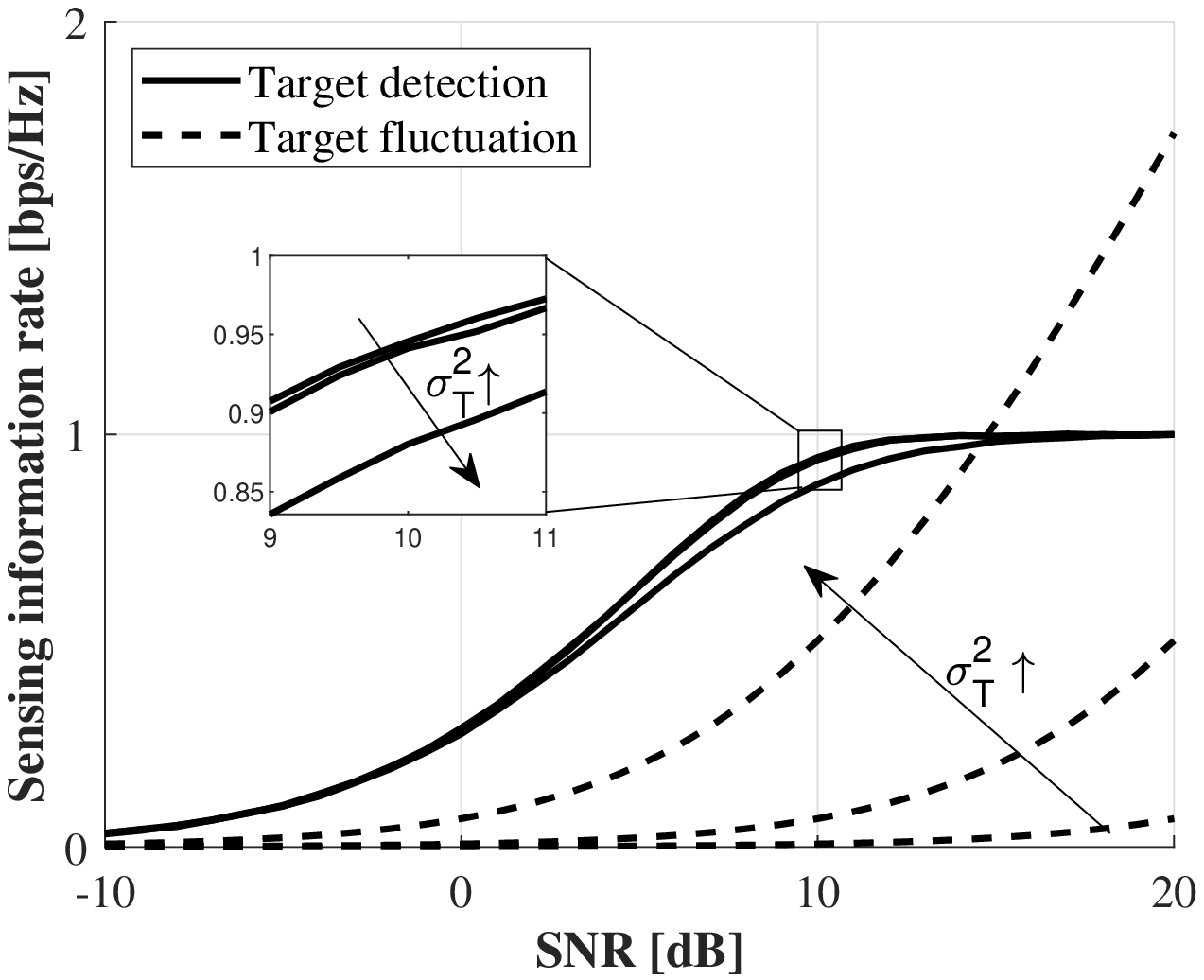}}
    \caption{
    Illustration of decomposed sensing rate. 
    Target signal statistics: 
    (a) $\gamma \in \{0.1, 0.5, 0.9\}$, $\mu_{\sf T} = 1$, and  $\sigma_{\sf T}^2 = 10^{-2}$,
    (b) $\gamma = 0.5$, $\mu_{\sf T} = 1$, and $\sigma_{\sf T}^2 \in \{10^{-1}, 10^{-2}, 10^{-3}\}$.}
    \label{fig:decomposition} 
\end{figure}

\subsection{ISAC Bandwidth Allocation}\label{sec:sim ISAC}
{\bf Simulation setup:}  We assume that a total bandwidth is set to $B=200$ MHz. We set ${\sf snr}_{\rm c}(0) = P_t/(BN_0) = 5$ dB for the communication system. For the radar system, we set ${\sf \overline{snr}_{s}}(1) = {\sf snr}_{N-1}(1) = 15$ dB and the maximum detection range as $10$ km. 

{\bf The effect of target statistics:}
We present the optimal bandwidth ratio $\lambda^\star$ and target statistics ($\gamma$, $\sigma_{\sf T}^{2})$ in Fig.~\ref{fig:KKT optimize}. The upper triangular marker indicates the optimal point. We find that both the amount of weighted sum spectral efficiency ${\bar R}_{\rm sum}(\lambda)$ and the radar-allocated bandwidth increases as both of the target uncertainties $\gamma$ and $\sigma_{\sf T}^{2}$ increase. It is also noticeable that the amount of information increment caused by increased $\sigma_{\sf T}^{2} $ becomes large when $\gamma$ increases due to the larger target estimation information.

\begin{figure} 
    \centering
    {%
    \includegraphics[width=0.5\columnwidth]{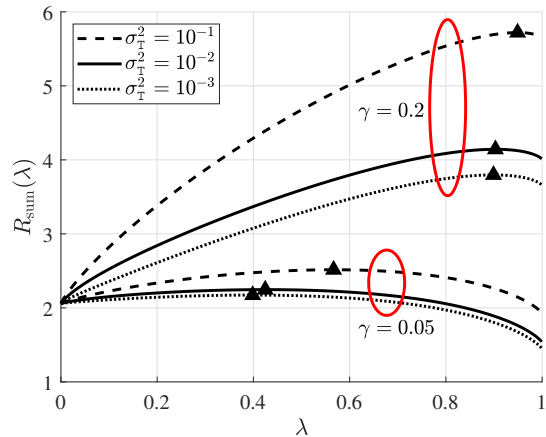}
    }
    \caption{
    Bandwidth allocation results depending on target statistics. The system weights are $w_{\rm s} = 5$ and $w_{\rm c} = 1$.}
    \label{fig:KKT optimize} 
\end{figure}

{\bf ISAC system trade-off:}
Fig.~\ref{fig:trade-off1} presents the information rate trade-off of the ISAC system for various target statistics $\gamma$ and $\sigma_{\sf T}^{2}$. The sensing rate increases gradually when the communication rate decreases. The behavior of the trade-off curve results from how $\gamma$ is modeled depending on the sensing bandwidth and target statistics. In particular, we fix $\gamma$ for each bin. Increasing the sensing bandwidth improves the resolution of the range-Doppler map. The increased resolution gives more chances to discover new targets in the environment. 

\begin{figure} 
    \centering
    \subfloat[]{%
    \includegraphics[width=0.49\columnwidth]{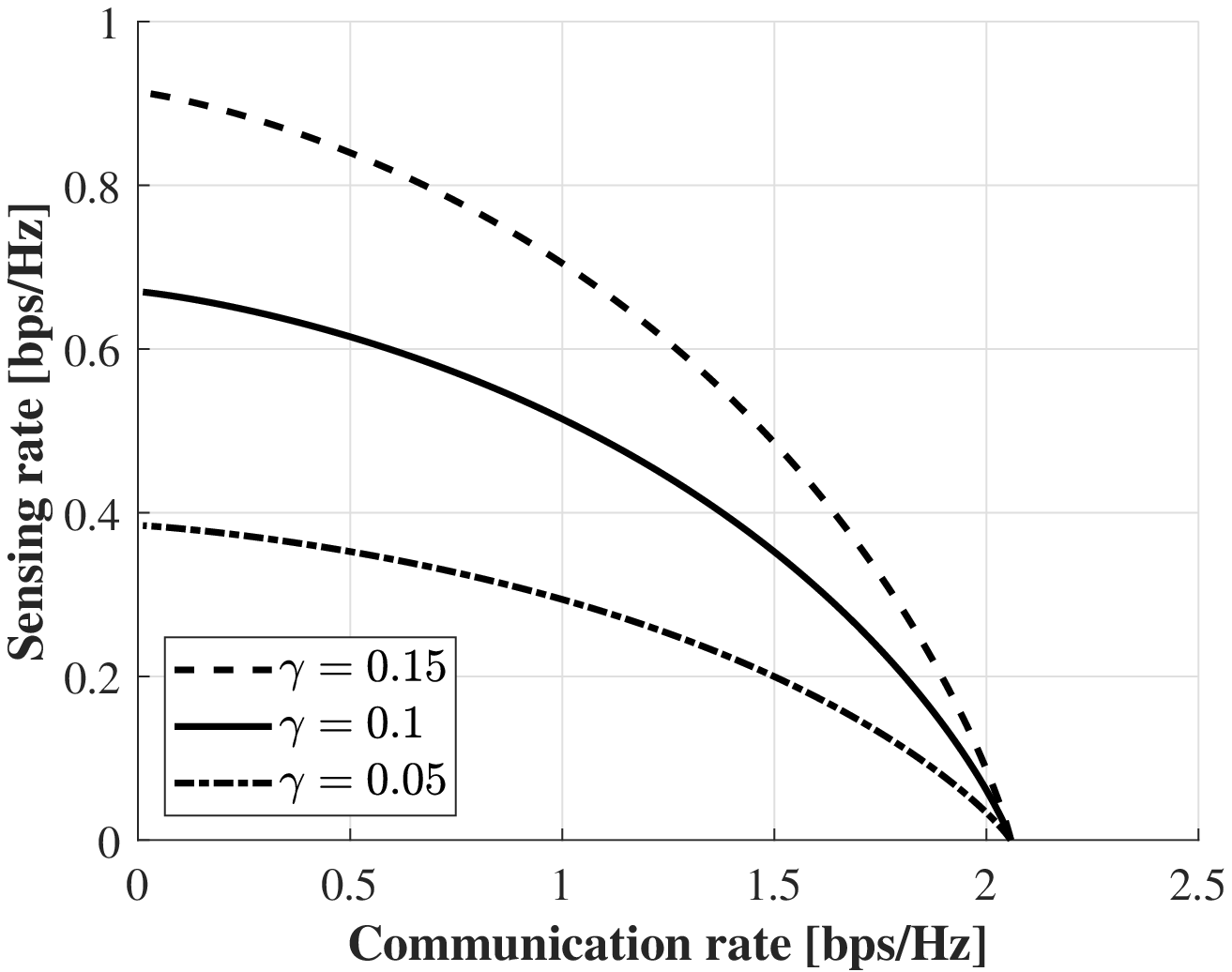}}
    \hfill
    \subfloat[]{%
        \includegraphics[width=0.49\columnwidth]{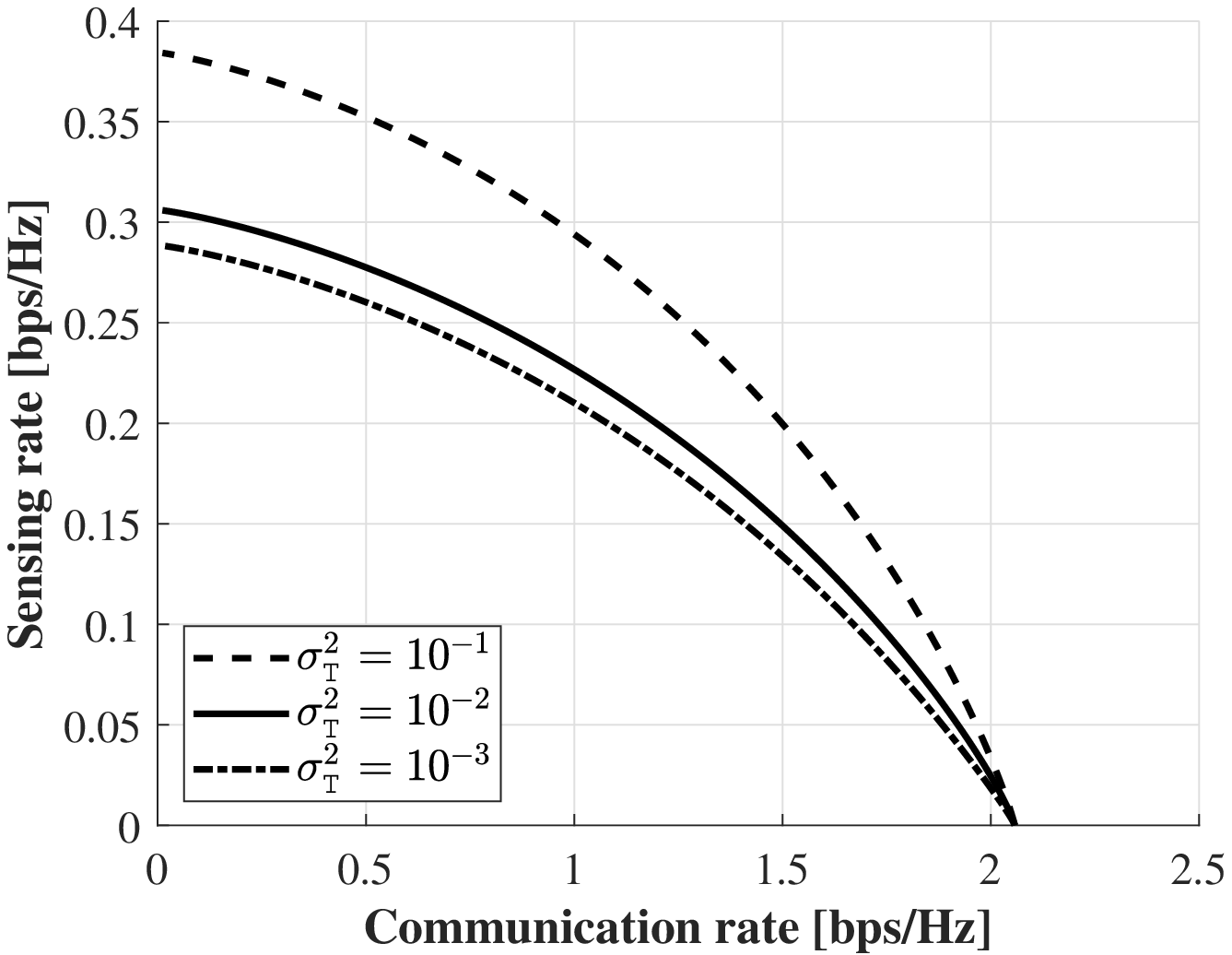}}
    \caption{
    Information rate of ISAC system according to target statistics $\gamma$ and $\sigma_{\sf T}^{2}$. 
    (a) $\gamma = 0.05$,
    (b) $\sigma_{\sf T}^{2} = 10^{-1}$.
    }
    \label{fig:trade-off1} 
\end{figure}

\section{Conclusion}\label{conclusion}
By utilizing an information-theoretic approach, we evaluated the performance of the ISAC system by characterizing the amount of information that the pulse-Doppler radar can obtain. First, we calculated the sensing rate of an unknown environment using the pulse-Doppler radar and established correlations with various radar system parameters. Then, by integrating the sensing rate, closed-form expressions, and Shannon's communication rate into the ISAC system, we derived an optimal bandwidth allocation strategy that maximizes the weighted sum spectral efficiency. Our findings indicated that when target uncertainty and SNR are high, more bandwidth should be allocated for the radar system. Our simulation results confirmed the accuracy of the derived closed-form expressions.

\appendix
\subsection{Proof of Lemma~\ref{lem1}}\label{appendix:proof of mmse estimator}
From Baye's rule, we compute 
\begin{align}
	\mathbb{P}[X_{n,k}\neq 0 |Y_{n,k}=y_{n,k}]  
	&=\frac{\mathbb{P}[X_{n,k}\neq 0|Y_{n,k}=y_{n,k}]\mathbb{P}(Y_{n,k}=y_{n,k})}{\mathbb{P}(Y_{n,k}=y_{n,k})}\nonumber\\
	&=\frac{ \mathbb{P}(X_{n,k}\neq 0, Y_{n,k}=y_{n,k})}{\mathbb{P}(X_{n,k}=0,Y_{n,k}=y_{n,k})+\mathbb{P}(X_{n,k}\neq 0,Y_{n,k}=y_{n,k})}. \label{eq:inclusionprob}
\end{align}
The joint probabilities are computed as
\begin{align}
 \mathbb{P}(X_{n,k}=0,Y_{n,k}=y_{n,k}) &=\mathbb{P}(X_{n,k}=0)\mathbb{P}(Y_{n,k}=y_{n,k}|X_{n,k}=0) \nonumber\\
 &=(1-\gamma) \frac{1}{\pi}e^{-\vert y_{n,k}\vert^2} \label{eq:joint1}
\end{align}
and
\begin{align}
 \mathbb{P}(X_{n,k}\neq 0,Y_{n,k}=y_{n,k}) 	&=\mathbb{P}(X_{n,k}\neq 0)\mathbb{P}(Y_{n,k}=y_{n,k}|X_{n,k}\neq 0) \nonumber\\
 &= \gamma \times \frac{1}{\pi ({\sf snr}_n\sigma_{\sf T}^2+1)}e^{-\frac{\vert y_{n,k}-\sqrt{{\sf snr}_n}\mu_{\sf T}\vert^2}{{\sf snr}_n\sigma_{\sf T}^2+1}}. \label{eq:joint2}
\end{align}
Invoking \eqref{eq:joint1} and \eqref{eq:joint2} into \eqref{eq:inclusionprob}, we obtain the probability that the target is absent in the $(n,k)$ range-Doppler bin as
\begin{align}
		\mathbb{P}[X_{n,k}\neq 0 |Y_{n,k}=y_{n,k}] 
		& = \frac{ \gamma \frac{1}{\pi ({\sf snr}_n \sigma_{\sf T}^2+1)}e^{-\frac{\vert y_{n,k}-\sqrt{{\sf snr}_n}\mu_{\sf T}\vert^2}{{\sf snr}_n\sigma_{\sf T}^2+1}}} { (1-\gamma) \frac{1}{\pi}e^{-\vert y_{n,k}\vert^2}
+ \gamma \frac{1}{\pi ({\sf snr}_n\sigma_{\sf T}^2+1)}e^{-\frac{\vert y_{n,k}-\sqrt{{\sf snr}_n}\mu_{\sf T}\vert^2}{{\sf snr}_n\sigma_{\sf T}^2+1}}}&\nonumber\\
&=\frac{1}{1+\frac{1-\gamma}{\gamma}({\sf snr}_n\sigma_{\sf T}^2+1)e^{-\vert y_{n,k}\vert^2+\frac{\vert y_{n,k}-\sqrt{{\sf snr}_n}\mu_{\sf T}\vert^2}{{\sf snr}_n\sigma_{\sf T}^2+1}} }. \label{eq:inclusion_pro}
\end{align}
This completes the proof of lemma.

\subsection{Proof of Theorem~\ref{thm:sensing rate approximation}}\label{appendix:proof of sensing rate approximation}
The proof proceeds two-steps: i) finding a lower bound of entropy of $Y_{n,k}$ and ii) compensating an offset that occurred in an asymptotic regime, i.e., ${\sf snr}_n \rarr 0$ and ${\sf snr}_n \rarr \infty$. In this proof, we shall use the equivalent channel:
\begin{align}
    Y_{n,k} &= X_{n,k} + Z_{n,k}, \qquad Z_{n,k} \sim \mathcal{CN}\left(0, \frac{1}{{\sf snr}_n}\right).
\end{align}
In addition, we drop the indices $(n,k)$ for notational convenience. 

First, the distribution of $Y$ is given by 
\begin{align}
    p(y) &= (1-\gamma)\frac{{\sf snr}}{\pi}\exp\left(-\rho\vert y\vert^2\right) + \gamma\frac{1}{\pi\left(\sigma_{\sf T}^2 + \frac{1}{{\sf snr}}\right)}\exp\left(-\frac{\vert y - \mu_{\sf T}\vert^2}{\sigma_{\sf T}^2 + \frac{1}{{\sf snr}}}\right)  \nonumber \\
    &\overset{\triangle}{=} \sum_{i=0}^1 w_ip_i(y),\label{eqn:mixture Gaussian pdf}
\end{align}
where $w_0=1-\gamma$, $w_1=\gamma$, $p_0(y)\sim\mathcal{CN}(0, 1/{\sf snr})$, and $p_1(y)\sim\mathcal{CN}(\mu_{\sf T}, \sigma_{\sf T}^2 + (1/{\sf snr}))$. 

From \cite{kolchinsky2017estimating}, a lower bound of $H(Y)$ is computed as 
\begin{align}\label{eqn:applying jensen}
    H(Y) 
    &= -\sum_i w_i \int p_i(y) \log \sum_j w_j p_j(y) {\rm d}y \nonumber \\
    &\ge -\sum_i w_i \log \int p_i(y) \sum_j w_j p_j(y) {\rm d}y \nonumber \\
    &= -\sum_i w_i \log \sum_j w_j \underbrace{\left( \int p_i(y)p_j(y) {\rm d}y \right)}_{\epsilon_{i,j}} \nonumber \\
    &\overset{\triangle}{=} H_{\sf Low}(Y).
\end{align}

For the complex Gaussian distribution $p_i(y)$ and $p_j(y)$, the integral $\epsilon_{i,j}$ in \eqref{eqn:applying jensen} can be further evaluated as follows: 
\begin{align}\label{eqn:P}
    \epsilon_{i,j}
    &= \frac{1}{\pi^2 \sigma_i^2 \sigma_j^2}\int_{y\in\mathbb{C}} \exp\left(-\frac{\vert y-\mu_i\vert^2}{\sigma_i^2}-\frac{\vert y-\mu_j\vert^2}{\sigma_j^2}\right) {\rm d}y \nonumber \\
    &= \frac{1}{\pi^2 \sigma_i^2 \sigma_j^2}\exp\left({-\frac{\sigma_j^2 \vert \mu_i\vert^2 + \sigma_i^2 \vert \mu_j\vert^2}{\sigma_i^2\sigma_j^2}}\right)  \nonumber \\
    & \qquad \times \underbrace{\int_{y\in\mathbb{C}} 
    \exp\left[
    -\left(\frac{1}{\sigma_i^2}+\frac{1}{\sigma_j^2}\right)\vert y \vert^2 + \frac{2}{\sigma_i^2}\text{Re}\{\mu_i^* y\} + \frac{2}{\sigma_j^2}\text{Re}\{\mu_j^* y\}
    \right] {\rm d}y}_{Q},
\end{align}
where $\mu_i^*$ and $\mu_j^*$ are complex conjugate of $\mu_i$ and $\mu_j$ respectively.
For the complex number $y$, by denoting the real component as $y_R$, and imaginary component $y_I$, the integration of $Q$ is separated into real and imaginary parts
\begin{align}
    Q &= \int_{\infty}^{\infty} \exp\left(-\left(\frac{1}{\sigma_i^2}+\frac{1}{\sigma_j^2}\right)y_R^2 + 2\left(\frac{\mu_{i,R}}{\sigma_i^2}+\frac{\mu_{j,R}}{\sigma_j^2}\right) y_R\right) {\rm d}y_R \nonumber \\
    & \qquad \times \int_{\infty}^{\infty} \exp\left(-\left(\frac{1}{\sigma_i^2}+\frac{1}{\sigma_j^2}\right)y_I^2 + 2\left(\frac{\mu_{i,I}}{\sigma_i^2}+\frac{\mu_{j,I}}{\sigma_j^2}\right) y_I\right) {\rm d}y_I. \label{eqn:Q} 
\end{align}

The integral \eqref{eqn:Q} are computed using the following identity:
\begin{equation}\label{eqn:Q identity}
        \int_{-\infty}^{\infty} \exp\left(-ay^2+2by\right) {\rm d}y = \sqrt{\frac{\pi}{a}}\exp\left(\frac{b^2}{a}\right), \ \ a > 0.
\end{equation}

Then, $Q$ is computed as
\begin{align}
    Q 
    &= \frac{\pi \sigma_i^2\sigma_j^2}{\sigma_i^2 + \sigma_j^2}\exp\left(
    \frac{\vert \mu_i\vert^2\sigma_j^4 + \vert\mu_j\vert^2\sigma_i^4 + 2\sigma_i^2\sigma_j^2(\mu_{i,R}\mu_{j,R}+\mu_{i,I}\mu_{j,I})}{\sigma_i^2\sigma_j^2(\sigma_i^2 +\sigma_j^2)}
    \right). \label{eqn:Q2} 
\end{align}

Using \eqref{eqn:P}, \eqref{eqn:Q identity}, and \eqref{eqn:Q2} we have
\begin{align}
    \epsilon_{i,j} 
    &= \frac{1}{\pi(\sigma_i^2+\sigma_j^2)}\exp\left(
    -\frac{\vert \mu_i - \mu_j\vert^2}{\sigma_i^2+\sigma_j^2}
    \right).
\end{align}
Therefore, we obtain the lower bound of $H(Y)$ as 
\begin{equation}\label{eqn:entropy lb tmp}
    H(Y)\ge H_{\sf Low}(Y) = -\sum_i w_i \log \sum_j w_j \epsilon_{i,j}.
\end{equation}

Next, we compensate for the offset in asymptotic regimes as done in \cite{choi2018spatial}. For this purpose, we limit ourselves to the case of $\gamma=0.5$ and $\sigma_{\sf T}^2 = 0$. Applying \eqref{eqn:entropy lb tmp} gives a lower bound of mutual information as
\begin{align}
    I(X; Y) &\ge H_{\sf Low}(Y) - H(Y\vert X) \nonumber \\
    &= -\log\left(\frac{{\sf snr}+ {\sf snr} \exp\left(-\frac{{\sf snr}\vert \mu_{\sf T}\vert^2}{2}\right)}{4\pi}\right) - \log(\pi e /{\sf snr}) \nonumber \\
    &= -\log\left(\frac{e}{4} \left(1+\exp\left(-\frac{{\sf snr}\vert \mu_{\sf T}\vert^2}{2}\right)\right) \right) \nonumber \\
    &\overset{\triangle}{=} I_{\sf Low}(X;Y).
\end{align}
Taking the limits to the two extreme SNR regimes gives the limiting values:
\begin{align}\label{eqn:limiting value}
    \lim_{\rho \rarr 0} I_{\sf Low}(X;Y) &= \log(2/e), \nonumber \\
    \lim_{\rho \rarr \infty} I_{\sf Low}(X;Y) &= \log(4/e).
\end{align}
Notice that the distribution of $Y$ is statistically equivalent to the received signal when sending a BPSK signal over an AWGN channel. Then, intuitively, the mutual information $I(X;Y)$ should approach to 0 when ${\sf snr}=0$, and to $\log 2$ when ${\sf snr}=\infty$. Using these facts and two limiting values in \eqref{eqn:limiting value}, we can define the offset of approximation as
\begin{equation}\label{eqn:lower bound offset}
    \Delta_{I(X;Y)} \overset{\triangle}{=} I(X;Y) - I_{\sf Low}(X;Y) = -\log(2/e). 
\end{equation}
Adding the offset \eqref{eqn:lower bound offset} to \eqref{eqn:entropy lb tmp} yields a tight approximation as
\begin{align}
    I(X;Y) &\approx H_{\sf Low}(Y) + \Delta_{I(X;Y)} - H(Y\vert X) \nonumber \\
    &= -\sum_i w_i \log\sum_j w_j \epsilon_{i,j} - \log(2\pi/{\sf snr}). \label{eqn:rate approx dirty}
\end{align}

Recall that the coefficients are $w_0=1-\gamma$, $w_1=\gamma$, $p_0(y)\sim\mathcal{CN}(0, 1/{\sf snr})$, and $p_1(y)\sim\mathcal{CN}(\mu_{\sf T}, \sigma_{\sf T}^2 + (1/{\sf snr}))$. By substituting these coefficients into \eqref{eqn:rate approx dirty}, we have the following:
\begin{align}
    I(X;Y) &\approx -(1-\gamma)\log\left( \frac{{\sf snr}~(1-\gamma)}{2\pi}  + \frac{{\sf snr}~\gamma}{\pi(2+{\sf snr}\sigma_{\sf T}^2)} e^{-\frac{{\sf snr}\vert\mu_{\sf T}\vert^2}{2+{\sf snr}\sigma_{\sf T}^2}} \right) - \nonumber \\
    & \hspace{2em} -\gamma \log\left( \frac{{\sf snr}~\gamma}{2\pi(1+{\sf snr}\sigma_{\sf T}^2)} + \frac{{\sf snr}~(1-\gamma)}{\pi (2+{\sf snr}\sigma_{\sf T}^2)} e^{-\frac{{\sf snr}\vert\mu_{\sf T}\vert^2}{2+{\sf snr}\sigma_{\sf T}^2}} \right) - \log\left(\frac{2\pi}{{\sf snr}}\right) \nonumber \\
    &= \gamma\log(1+{\sf snr}\sigma_{\sf T}^2) + {\sf H}_b(\gamma) - (1-\gamma) \log\left(1 + \frac{2\gamma}{(1-\gamma)(2+{\sf snr}\sigma_{\sf T}^2)}e^{-\frac{{\sf snr}\vert\mu_{\sf T}\vert^2}{2+{\sf snr}\sigma_{\sf T}^2}} \right) \nonumber \\
    &\hspace{2em} -\gamma\log\left(
    1+\frac{2(1-\gamma)(1+{\sf snr}\sigma_{\sf T}^2)}{\gamma(2+{\sf snr}\sigma_{\sf T}^2)}e^{-\frac{{\sf snr}\vert\mu_{\sf T}\vert^2}{2+{\sf snr}\sigma_{\sf T}^2}} \right).
\end{align}
This completes the proof.

\subsection{Proof of Theorem~\ref{thm:bandwidth allocation kkt}}\label{appendix:proof of bandwidth allocation kkt}
    Let $f_1(\lambda) = w_{\rm s}{\bar R}_{\rm s}(\lambda)$ and $f_2(\lambda) = w_{\rm c}{R}_{\rm c}(\lambda)$. 
    First, we will show that both $f_1(\lambda)$ and $f_2(\lambda)$ are concave in $\lambda \in [0, 1]$. Then, satisfying \eqref{eqn:kkt1} is sufficient to be the global optimal because ${\bar R}_{\rm sum}(\lambda) = f_1(\lambda) + f_2(\lambda)$ is also concave. When there is no such ${\bar\lambda}^\star$, the selection rule according to \eqref{eqn:kkt2} is trivially obtained because i) $\frac{d {\bar R}_{\rm sum}(\lambda)}{d\lambda} > 0$ implies that increasing $\lambda$ from 0 to 1 increases ${\bar R}_{\rm sum}(\lambda)$ and ii) $\frac{d {\bar R}_{\rm sum}(\lambda)}{d\lambda} < 0$ implies that decreasing $\lambda$ from 1 to 0 increases ${\bar R}_{\rm sum}(\lambda)$. 
    Therefore, the remaining task is to prove that $f_1(\lambda)$ and $f_2(\lambda)$ are concave in $\lambda \in [0, 1]$ by showing that the second order derivative is negative.

    First, we consider $f_1(\lambda)$. 
    Recall that ${\bar R}_{\rm s}(\lambda) = \lambda B R_{n,k}( {\sf \overline{snr}_{s}(\lambda)} )$ and ${\sf \overline{snr}_{s}}(\lambda) = \rho_{N-1}/\lambda$. We use the following I-MMSE relationship to derive the derivative expression:
    \begin{align}
        \frac{d R_{n,k}(\overline{\sf snr}_{s}(\lambda))}{d ~ \overline{\sf snr}_{s}(\lambda)} &= {\sf mmse}_{n,k}(\overline{\sf snr}_{s}(\lambda)) ~ \log_{2}{e}.
    \end{align}
    Then, the first-order derivative of $f_1(\lambda)$ is given by 
    \begin{align}
        \frac{df_1(\lambda)}{d\lambda} &= w_{\rm s}B\left(R_{n,k}(\overline{\sf snr}_{s}(\lambda)) -\overline{\sf snr}_{s}(\lambda) \cdot \frac{dR_{n,k}(\overline{\sf snr}_{s}(\lambda))}{d~\overline{\sf snr}_{s}(\lambda)}\right) \nonumber \\
        &= w_{\rm s}B\left(R_{n,k}(\overline{\sf snr}_{s}(\lambda)) - \overline{\sf snr}_{s}(\lambda) \cdot {\sf mmse}_{n,k}(\overline{\sf snr}_{s}(\lambda))\log_2e\right) \ge 0, 
    \end{align}    
    where ${\sf mmse}_{n,k}$ is in \eqref{eq:mmse}.
    Also, the non-negativeness come from the I-MMSE relationship in \cite[Corollary 2]{guo2005mutual}.
    The second-order derivative is given by
    \begin{equation}\label{eqn:second deriv 1}
        \frac{d^2 f_1(\lambda)}{d\lambda^2} = \frac{w_{\rm s}\left(\overline{\sf snr}_{s}(\lambda)\right)^2}{\lambda} \cdot \frac{d~{\sf mmse}_{n,k}({\tilde \rho}(\lambda))}{d~\overline{\sf snr}_{s}(\lambda)}\log_2e < 0.
    \end{equation}
    Here, the inequality comes from the fact that MMSE is decreasing function of $\overline{\sf snr}_{r}(\lambda)$ \cite{guo2011estimation}. 
    
    
    Second, we consider $f_2(\lambda)$. The direct computation yields the following first and second order derivative:
    \begin{align}
        \frac{d f_2(\lambda)}{d\lambda} &= -w_{\rm c}B \log_2\left(1+\frac{{\rho}_{\rm c}}{1-\lambda}\right) + \frac{w_{\rm c}B{\rho}_{\rm c}}{1-\lambda + {\rho}_{\rm c}}\log_2e < 0, \nonumber \\
        \frac{d^2 f_2(\lambda)}{d\lambda^2} &= - \frac{w_{\rm c}B{\rho}_{\rm c}^2}{(1-\lambda)\left(1-\lambda+{\rho}_{\rm c}\right)^2}\log_2e < 0. \label{eqn:second deriv 2}
    \end{align}
    Because of \eqref{eqn:second deriv 1} and \eqref{eqn:second deriv 2}, both $f_1(\lambda)$ and $f_2(\lambda)$ are concave in $\lambda \in [0,1]$.

\bibliographystyle{IEEEtran}
\bibliography{refs_all}

\end{document}